\newtheorem{theorem}{Theorem}[section]
\newtheorem{lemma}[theorem]{Lemma}
\newcommand{\mathset}[1]{\ensuremath {\mathbb {#1}}}
\newcommand{\R}{\mathset{R}}
\newcommand\eps{\varepsilon}
\newcommand\etal{et al.\xspace}
\DeclareMathOperator{\diam}{diam}
\DeclareMathOperator{\UD}{UD}
\title{Routing in Unit Disk
Graphs\footnote{This work is supported in part by GIF
  project 1161 \&  DFG projects MU/3501/1. A
  preliminary version appeared as
  Haim Kaplan, Wolfgang Mulzer, Liam Roditty, and Paul Seiferth.
\emph{Routing in Unit Disk Graphs.}
Proc.~12th LATIN, 2016.}}
\author{Haim Kaplan\thanks{School of Computer Science, Tel Aviv University,
 Israel, \texttt{haimk@post.tau.ac.il}} \and
 Wolfgang Mulzer\thanks{Institut f\"ur Informatik,
Freie Universit\"at Berlin,
  Germany
  \texttt{\{mulzer,pseiferth\}@inf.fu-berlin.de}} \and
Liam Roditty\thanks{Department of Computer Science, Bar Ilan University,
  Israel
  \texttt{liamr@macs.biu.ac.il}} \and
Paul Seiferth\footnotemark[3]}
\begin{document}
\maketitle

\begin{abstract}
Let $S \subset \R^2$ be a set of $n$ sites. The unit disk graph
$\UD(S)$ on $S$ has
vertex set $S$ and an edge between two distinct sites $s,t \in S$ if and
only if $s$ and $t$ have Euclidean distance $|st| \leq 1$.

A routing scheme $R$ for $\UD(S)$ assigns to each site $s \in S$ a
\emph{label} $\ell(s)$ and a \emph{routing table} $\rho(s)$.
For any two sites $s, t \in S$,
the scheme $R$ must be able to route a packet from
$s$ to $t$ in the following way: given a \emph{current site} $r$
(initially, $r = s$),
a \emph{header} $h$ (initially empty), and
the \emph{label} $\ell(t)$ of the target,
the scheme $R$ consults the  routing table
$\rho(r)$ to compute a
neighbor $r'$ of $r$, a new header $h'$, and the label $\ell(t')$
 of an intermediate target $t'$. (The label of the original target may be stored
at the header $h'$.)
The packet is then routed to $r'$,
and the procedure is repeated until the packet reaches $t$.
The resulting sequence of sites is called the
\emph{routing path}.
The \emph{stretch} of $R$ is the maximum ratio of the (Euclidean)
length of the routing path produced by $R$ and the shortest path in
$\UD(S)$, over all pairs of distinct sites in $S$.

For any given $\eps > 0$,
we show how to construct a routing scheme for $\UD(S)$ with stretch
$1+\eps$ using labels of $O(\log n)$ bits and routing tables of
$O(\eps^{-5}\log^2 n \log^2 D)$ bits, where $D$ is the (Euclidean)
diameter of
$\UD(S)$. The header size is $O(\log n \log D)$ bits.
\end{abstract}

\section{Introduction}
Routing in graphs constitutes a fundamental problem in
distributed graph algorithms~\cite{GiordanoSt04,PelegUp89}.
Given a graph $G$, we would like to be able to
route a packet from any node in $G$ to any other node,
where the destination node is represented by its \emph{label}.
The routing algorithm should be \emph{local}, meaning that it
uses only information stored with the packet and
with the current node, and it should be \emph{efficient}, meaning
that the packet does not travel much longer than necessary.
There is an obvious solution to this problem:
with each node $s$ of $G$, we store the shortest path tree
for $s$. Then it is easy to route a packet along the shortest
path to its destination. However, this solution is very inefficient:
we need to store the complete topology of $G$ with each node, leading
to quadratic space usage. Thus, the goal of a routing scheme
is to store as little information as possible with each node
of the graph, such that we can still route a packet on a path
of length close to shortest.

For general graphs a plethora of results is available, reflecting
the work of almost three decades (see, e.g., \cite{RodittyTo15,Chechik13} and
the references therein).
However, for general graphs, any efficient routing scheme needs
to store $\Omega(n^{\alpha})$ bits per node, for some
$\alpha > 0$~\cite{PelegUp89}.
Thus, it is natural to ask whether improved results are possible
for specialized graph classes.
For example, for trees it is known how to obtain a routing scheme that
follows a shortest path and requires $O(\log n)$ bits of information at
each node~\cite{ThorupZw01,FraigniaudGa01,SantoroKhatibK85}.
In planar graphs, for any $\eps > 0$ it is possible to store a
polylogarithmic number of  bits at
each node in order to route a packet along a path of length at most
$1+\eps$ times the length of the shortest path~\cite{Thorup04}.

A graph class that is of particular interest for routing problems
comes from the study of mobile and wireless networks. Such networks
are traditionally modeled as \emph{unit disk graphs}~\cite{ClarkCo90}.
The nodes are represented by points in the plane, and two nodes are connected
if and only if the distance between the corresponding points is at most
one.\footnote{Alternatively, a unit disk graph is the
intersection graph of a set of disks of radii $1/2$.}
Even though unit disk graphs may be dense,
they share many properties with planar graphs, in particular with respect
to algorithmic problems.
There exists a vast literature on routing in unit disk graphs,
developed in the wireless networking community
(cf.~\cite{GiordanoSt04}). Most of these schemes were designed
with a different outlook, aiming at practical and simple solutions instead
of provable worst-case guarantees. For example, the most popular
routing method is called \emph{geographic routing}. Here, we assume
that the coordinates of the target site are known, and we route
the packet to the neighbor that is closest to the target. Even
though this is a good practical heuristic, it can happen that
a packet gets stuck. There are several ways to modify
geographic routing to obtain a routing scheme where a packet
always reaches its target
(if possible)~\cite{BoseMoStUr01,KarpKu00,KuhnWaZhZo03}. However,
in these schemes the routing path may be much longer than
the shortest path.

To the best of our knowledge, the only compact routing scheme
for unit disk graphs that achieves routing paths that are provably within
a constant factor of the optimum is due to Yan, Xiang, and
Dragan~\cite{YanXiDr12}.
More precisely, they show how to assign a label of
$O(\log^2 n)$ bits to each node of the graph such that
given the labels of a source $s$ and of a target $t$, one
can compute a neighbor of $s$ that leads toward $t$.
They prove that by repeating this procedure, one can obtain a path
from $s$ to $t$ with at most $3d_h(s,t) + 12$ hops, where $d_h(s,t)$
is the hop distance of $s$ and $t$.
For this, Yan~\etal extend a scheme by
Gupta~\etal~\cite{GuptaKuRa04} for planar graphs to unit disk graphs by
using a delicate
planarization argument
to obtain small-sized balanced separators. Even though the scheme by
Yan~\etal is conceptually
simple, it requires a detailed analysis with an extensive case
distinction.

We propose an alternative approach to routing in unit disk graphs.
Our scheme is based on the well-separated pair decomposition
for unit disk graphs~\cite{GaoZh05}. It stores
a polylogarithmic number of bits at each node of the graph,
and it constructs a routing path that can be made arbitrarily close
to a shortest path, where the edges are weighted according
to their Euclidean length
(see Section~\ref{sec:prelims} for a precise statement
of our results). This compares favorably with the scheme by
Yan \etal~\cite{YanXiDr12} which achieves only a constant factor
approximation. Furthermore, our labels need only $O(\log n)$ bits and
our scheme is arguably simpler to analyze.
However, unlike the algorithm by Yan \etal, our scheme requires that the
packet contain a modifiable \emph{header} with a polylogarithmic number of
bits. It is an interesting open question whether a scheme
with similar performance guarantees that does not require
a modifiable header exists.

\section{The Model and Our Results}
\label{sec:prelims}

Let $S \subset \R^2$ be a set of $n$ \emph{sites} in the plane.
We say that $S$ has \emph{density} at most $\vartheta$ if every unit disk
contains at most $\vartheta$ points from $S$. The density
$\vartheta$ of $S$ is
\emph{bounded} if $\vartheta = O(1)$.
The \emph{unit disk graph} for $S$ is the graph
$\UD(S)$
with vertex set $S$ and an edge $st$ between two
distinct sites $s, t \in S$
if and only if $|st| \leq 1$, where $|\cdot|$ denotes the Euclidean
distance.  We define the \emph{weight} of the edge
$st$ to be its Euclidean length and use $d(\cdot, \cdot)$
to denote the shortest path distance in $\UD(S)$.
Given a set $T \subset S$, we define $\diam(T)$ as
the (Euclidean) \emph{diameter} of the induced subgraph
$\UD(T)$ of $\UD(S)$, i.e., the maximum length of a shortest
path between two sites in $\UD(T)$.

We would like to compute a \emph{routing scheme} for $\UD(S)$ with a
small \emph{stretch} and compact \emph{routing tables}.
Formally, a \emph{routing scheme} for $\UD(S)$ consists of
(i) a \emph{label} $\ell(s) \in \{0, 1\}^*$
and (ii) a \emph{routing table} $\rho(s) \in \{0, 1\}^*$, for each site $s \in S$.
The labels and the routing tables correspond to
 a \emph{routing function}
$f : S \times \{0, 1\}^* \times \{0, 1\}^*
\rightarrow S \times \{0,1\}^*\times \{0, 1\}^*$. The  function
$f$ takes as input
a \emph{current site} $s$,
the \emph{label} $\ell(t)$ of a \emph{target site} $t$,  and a \emph{header}
$h \in \{0, 1\}^*$.
The routing function may use its input and the routing table
$\rho(s)$ of $s$ to compute a new site $s'$,
a modified
header $h'$, and the label of an \emph{intermediate target}
$\ell(t')$. The new site $s'$ may be either $s$ or a neighbor
of $s$ in $\UD(S)$. If $s'=s$ then the packet stays at $s$ and
we recompute the routing function at $s$ with the modified
header and the label of the intermediate target.
If $s'$ is a neighbor of $s$ then $s$ sends the packet (with
the header and the label of the intermediate target) to $s'$.
Even though the eventual goal of the packet is the target $t$,
we introduce the intermediate target
$t'$ into the notation, since it allows for a more succinct
presentation of the routing algorithm. (The original target can
be stored with the modifiable header and will be
extracted later according to the definition of the routing function.
Similarly, the routing may proceed through several intermediate
targets, but at each point in time, the routing function
receives only one of them.)

Let $h_0$ be the empty header.
For any two sites $s, t  \in S$,
consider the sequence of triples given by
$(s_0, \ell_0, h_0) = (s, \ell(t), h_0)$
and $(s_{i}, \ell_i, h_i) = f(s_{i-1}, \ell_{i-1}, h_{i-1})$
for $i \geq 1$. We say that the
 routing scheme is \emph{correct} if for any two sites
 $s, t  \in S$ there exists an index $k$ such that $s_k = t$.

We consider the minimal such index $k$,
 $k = k(s, t) \geq 0$ such that $s_k = t$ and
$s_i \neq t$ for $i < k$. We say that \emph{the
routing scheme reaches $t$ after $k$ steps}.
We call $s_0, s_1, \dots, s_k$ the \emph{routing path} between
$s$ and $t$, and we define the \emph{routing distance}
$d_\rho(s, t)$ between $s$ and $t$
as $d_\rho(s, t) = \sum_{i = 1}^{k} |s_{i-1}s_{i}|$.
Recall that $|\cdot|$ denotes the Euclidean distance.

The quality of the routing scheme is measured by several parameters:
\begin{itemize}
 \item the \emph{label size} $L(n) = \max_{|S|= n}
\max_{s \in S} |\ell(s)|$,

\item
the \emph{table size}
$T(n) = \max_{|S|= n}
\max_{s \in S} |\rho(s)|$,

\item
the \emph{header size}
$H(n) = \max_{|S|= n}
\max_{s \neq t \in S}\max_{i = 1, \dots, k(s, t)} |h_i|$,

\item and the \emph{stretch}
$\varphi(n) = \max_{|S|= n}\max_{s \neq t \in S} \frac{d_\rho(s, t)}{d(s, t)}$.
\end{itemize}

We show that for any $S \subset \R^2$, $|S| = n$,
and any $\eps > 0$ we can construct a routing scheme
with  $\varphi(n)= 1+\eps$, $L(n) = O(\log n)$,
$T(n) = O(\eps^{-5}\log^2 n \log^2 D)$, and  $H(n) = O(\log n \log D)$,
where $D = \diam(S)$ is the diameter of $\UD(S)$.
We emphasize that in a unit disk graph, we always have
$D \leq n$. We may also assume that $D \geq 2$: otherwise,
$S$ could be approximated by an $\eps$-net with $O(1)$ vertices,
and we could route in $\UD(S)$
with routing tables and headers of constant size
(see Section~\ref{sec:unboundeddensity}).

The high dependence on $1/\eps$ renders
our result mostly of theoretical interest. However,
it demonstrates for the first time that the stretch
can be made arbitrarily close to $1$ while maintaining
routing tables of polylogarithmic size.

Even though our algorithm uses ideas from previous
routing schemes, such as ``interval routing'' or
hierarchical clustering~\cite{RodittyTo15}, to the
best of our knowledge, we are the first to use the
well-separated pair decomposition~\cite{CallahanKo95}
as the basis for a routing scheme. In general graphs,
this approach is not possible, since in this metric, small
well-separated pair decompositions do not exist~\cite{MitchellMu17}.
As discussed above, in geometric settings, the focus has been
on position-based methods that make stronger use of the
geometry than the WSPD does~\cite{GiordanoSt04}. Thus,
our approach is more geometric than the traditional methods
for general graphs, and at the same time more combinatorial
than the methods based on geometry.

\section{The Well-Separated Pair Decomposition for $\UD(S)$}
\label{sec:wspd}

Our routing scheme uses the well-separated pair decomposition
(WSPD) for
the unit disk graph metric given by Gao and Zhang~\cite{GaoZh05}.
WSPDs provide a compact way to efficiently
encode the approximate pairwise distances in a metric space.
Originally, WSPDs were introduced by Callahan and
Kosaraju~\cite{CallahanKo95}
in the context of the Euclidean metric, and they have found numerous
applications since then (see e.g., \cite{GaoZh05,NarasimhanSmid07}
and the references therein).

Since our routing scheme relies crucially on the specific structure
of the WSPD described by Gao and Zhang, we remind the reader of the main steps
of their algorithm and analysis.

First, Gao and Zhang assume that $S$ has bounded density
and that $\UD(S)$ is connected.
They construct the Euclidean minimum spanning tree $T$ for $S$.
It is well known that $T$ is a spanning tree for $\UD(S)$ with
maximum degree $6$. Furthermore, $T$ can be constructed
in $O(n \log n)$ time~\cite{4M}.
Since $T$ has maximum degree $6$, there exists an edge $e$ in $T$
such that $T \setminus e$ consists of two trees with at least
$\lceil (n-1)/6 \rceil$ vertices each. By applying this observation
recursively, we obtain a
\emph{hierarchical decomposition} $H$ of $T$. The decomposition
$H$ is a binary tree. Each node $v$ of $H$ represents
a subtree $T_v$ of $T$ with vertex set $S_v \subseteq S$ such that (i) the
root of $H$ corresponds
to $T$; (ii) the leaves of $H$ are in one-to-one correspondence with
the sites in $S$; and
(iii) let $v$ be an inner node of $H$ with children $u$ and $w$. Then
$v$ has an \emph{associated edge}
$e_v \in T_v$ such that removing $e_v$ from $T_v$ yields the
two subtrees $T_u$ and $T_w$ represented by $u$ and $w$.
Furthermore, we have $|S_u|, |S_w| \geq \lceil(|S_v| - 1)/6\rceil$.

\begin{figure}[htb]
\centering
 \includegraphics[scale=0.7]{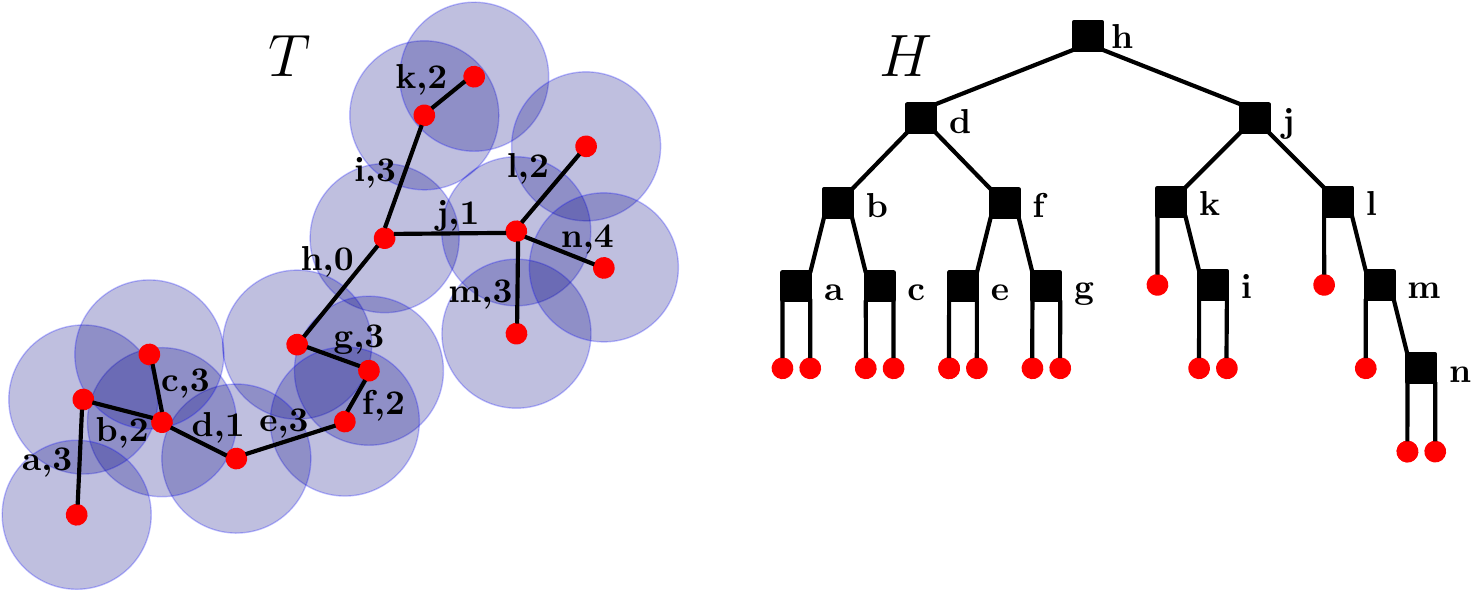}
\caption{An EMST $T$ of $\UD(S)$ (left) where the edges are annotated
with their level in the hierarchical decomposition $H$ (right).}
\label{fig:decomposition}
\end{figure}

It follows that $H$ has height $O(\log n)$.
The \emph{depth} $\delta(v)$ of a node $v \in H$ is defined as the number
of edges on the path from $v$ to the root of $H$.
The \emph{level} of the associated edge $e_v$ of $v$ is
the depth of $v$ in $H$. This uniquely defines a level for
each edge in $T$. Now, for each node $v \in H$, the subtree $T_v$
is a connected component in the forest that is induced in $T$
by the edges of level at least $\delta(v)$
(see Figure~\ref{fig:decomposition}).

After computing the hierarchical decomposition,
the algorithm of Gao and Zhang essentially uses the greedy
method of Callahan and Kosaraju~\cite{CallahanKo95} to construct a WSPD, with $H$ instead
 of the quadtree (or the fair split tree) in \cite{CallahanKo95}.
Let $c \geq 1$ be a separation parameter.
The algorithm traverses $H$ and produces a sequence
$\Xi = (u_1, v_1), (u_2, v_2), \dots, (u_m, v_m)$ of
pairs of nodes of $H$, with the following properties:
\begin{enumerate}
\item
The sets $S_{u_1} \times S_{v_1},
S_{u_2} \times S_{v_2}, \dots, S_{u_m} \times S_{v_m}$
constitute a partition of $S \times S$. This means that
for each ordered pair of sites $(s, t) \in S \times S$, there is
exactly one pair $(u, v) \in \Xi$
with $(s, t) \in S_{u} \times S_{v}$. We say that
$(u, v)$ \emph{represents} $(s, t)$.
\item
Each pair $(u, v) \in \Xi$ is $c$-\emph{well-separated}, i.e.,
we have
\begin{equation}
\label{equ:well-separated}
(c+2)\max\{|S_{u}| - 1, |S_{v}| - 1\} \leq |\sigma(u)\sigma(v)|,
\end{equation}
where $\sigma(u), \sigma(v)$ are  sites in $S_u$ and
$S_v$, respectively, chosen by the algorithm. (This, in fact, holds for any pair of sites in
$S_u$ and
$S_v$, since the algorithm chooses $\sigma(u)$ and $\sigma(v)$ arbitrarily.)
\end{enumerate}
Since in the shortest path metric of $\UD(S)$ the diameter
$\diam(S_u)$ is at most $|S_u| - 1$ and since
$|\sigma(u)\sigma(v)|
\leq d(\sigma(u), \sigma(v))$,
(\ref{equ:well-separated}) implies that
\begin{equation}
\label{equ:wspd_traditional}
(c+2)\max\{\diam(S_{u}), \diam(S_{v})\} \leq
d(\sigma(u), \sigma(v)),
\end{equation}
which is the traditional well-separation condition. However,
(\ref{equ:well-separated}) is easier to check algorithmically and
has additional advantages that we will exploit in our routing scheme
below.

Gao and Zhang show that their
algorithm produces a $c$-WSPD with $m = O(\vartheta c^2 n \log n)$ pairs,
where $\vartheta$ is the density of $S$.
More precisely, they prove the following lemma:
\begin{lemma}[Lemma 4.3 and Corollary 4.6 in \cite{GaoZh05}]
\label{lem:wspdpairs}
For each node $u \in H$, the WSPD $\Xi$ has
$O(\vartheta c^2|S_u|)$ pairs that contain $u$.\qed
\end{lemma}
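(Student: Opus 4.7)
The plan is to establish the lemma by a planar packing argument, combining bounded density with the greedy structure of Gao and Zhang's WSPD algorithm.

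First, I would fix $u \in H$ and argue that every WSPD pair $(u,v) \in \Xi$ containing $u$ satisfies $|S_u|$ and $|S_v|$ comparable up to constants. Tracing the recursion, $u$ can appear on one side of a pair only after the algorithm splits its parent $u'$ in some pair $(u',w_0)$; since the algorithm always splits the larger side, $|S_{u'}| \ge |S_{w_0}|$ at that moment. Combined with $|S_u| \ge \lceil(|S_{u'}|-1)/6\rceil$, this yields $|S_{w_0}| \le 6|S_u|$. All further recursive calls keeping $u$ on one side only split the other side (otherwise $u$ itself is split and disappears), so every subsequent $v$ is a descendant of $w_0$ in $H$ and satisfies $|S_v| \le |S_{w_0}| \le 6|S_u|$. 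A symmetric inspection shows $|S_v| \ge |S_u|/6$, except for at most one ``first-call'' pair per sub-recursion of $u$; those exceptional pairs I would handle separately by combining the well-separation lower bound $|\sigma(u)\sigma(v)| \ge (c+2)(|S_u|-1)$ with the distance upper bound from the algorithm's trace to force $|S_v| = \Omega(|S_u|/c)$ there as well.

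Next, I would geometrically localize $\sigma(v)$. The predecessor pair in the recursion is not well-separated, and any connected subtree $T_A$ of the EMST has Euclidean diameter at most $|S_A|-1$ (since the EMST edges have length at most $1$). Applying the triangle inequality then yields $|\sigma(u)\sigma(v)| = O(c|S_u|)$. Together with the lower bound $|\sigma(u)\sigma(v)| \ge (c+2)\max(|S_u|-1,|S_v|-1)$ from the well-separation of $(u,v)$, this places $\sigma(v)$ in an annulus around $\sigma(u)$ whose inner and outer radii are both of order $c|S_u|$, hence of area $O(c^2|S_u|^2)$. Expanding this annulus by $O(|S_u|)$ to accommodate the extent of $S_v$ itself keeps the total region's area at $O(c^2|S_u|^2)$.

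Finally, the packing itself. Since WSPD pairs partition $S\times S$, the sets $S_{v_i}$ for distinct pairs $(u,v_i)\in \Xi$ are pairwise disjoint. By bounded density, the annular region above contains at most $O(\vartheta c^2 |S_u|^2)$ sites of $S$. Each $S_{v_i}$ occupies $\Omega(|S_u|)$ distinct sites inside it, so the number of pairs is at most $O(\vartheta c^2|S_u|^2/|S_u|) = O(\vartheta c^2|S_u|)$, as claimed. I expect the main obstacle to be making the size-comparability claim fully rigorous: bounding the ``first-call'' pairs whose $|S_v|$ could in principle be tiny requires either the refined well-separation analysis sketched above or an amortized charging against the splits of $u$'s parent $u'$ in the recursion tree. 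Once that is in hand, the geometric packing step is routine.
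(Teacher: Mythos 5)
The paper does not prove this lemma; it is cited verbatim from Gao and Zhang (Lemma~4.3 and Corollary~4.6 in~\cite{GaoZh05}) with a \qed\ in the statement. So there is no paper proof to compare against, and your proposal must be judged on its own.

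Your high-level plan (localize $\sigma(v)$ in a ball of radius $O(c|S_u|)$ around $\sigma(u)$ using the non-well-separated parent pair, then pack the pairwise-disjoint sets $S_v$ into that ball using the density bound) is the right shape of argument, and the first two stages are sound: the bound $|S_v|\le |S_{u'}|\le 6|S_u|+1$ is correct, and the ball-containment follows from the balance of $H$ and the fact that a connected subtree of the EMST with $k$ vertices has Euclidean diameter at most $k-1$. The serious gap is in the size-comparability step. For a pair $(u,v)$ produced by splitting $w_{k-1}$ (parent of $v$, with $|S_{w_{k-1}}|\ge |S_u|$), your bound $|S_v|\ge(|S_u|-1)/6$ is fine. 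But for the ``first-call'' pairs, where FindPairs$(u,v)$ is spawned directly by splitting $u'$ and is immediately well-separated, you offer no valid lower bound on $|S_v|$. Writing out the constraints you propose to combine --- well-separation of $(u,v)$ gives $|\sigma(u)\sigma(v)|\ge(c+2)(|S_u|-1)$, and non-separation of $(u',v)$ gives $|\sigma(u')\sigma(v)|<(c+2)(|S_{u'}|-1)$ --- neither involves $|S_v|$ from below at all, so they cannot ``force $|S_v|=\Omega(|S_u|/c)$.'' In fact $|S_v|=1$ is entirely consistent with both inequalities, so the claimed lower bound is false. There is also a second problem: the number of first-call pairs is not ``at most one per sub-recursion''; it is the number of calls FindPairs$(u',w)$ in which $u'$ is the split side, and that count is itself exactly the quantity one is trying to bound recursively, so the claim cannot be invoked without a genuine amortization.

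As a consequence, your packing argument does not close: for the first-call pairs each $S_v$ can have as few as one site, so packing into a disk of area $O(c^2|S_u|^2)$ only yields $O(\vartheta c^2|S_u|^2)$ such pairs --- a factor $|S_u|$ too many. (And even under the unsupported $|S_v|=\Omega(|S_u|/c)$ bound, the packing would give $O(\vartheta c^3|S_u|)$, one factor of $c$ too large.) So the proposal does not establish the lemma; the exceptional pairs need either a correct amortized charging scheme against the recursion tree or a tighter annulus argument (exploiting that the inner radius is already $\Theta(c|S_u|)$ so the annulus width, not the full disk, controls the count), neither of which is present in the sketch.
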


\section{Preliminary Lemmas}

We begin with two technical lemmas on WSPDs
that will be useful later on.
The first lemma shows that the choice of
the sites $\sigma(u)$ for the nodes
$u \in H$ is essentially arbitrary.

\begin{lemma}
\label{lem:diameter}
Let $\Xi$ be a $c$-WSPD for $S$ and
let $s, t$ be two sites such that the pair $(u,v) \in \Xi$
represents $(s,t)$.
Then $c\diam(S_u) \leq c (|S_u| - 1)\leq d(s, t)$.
\end{lemma}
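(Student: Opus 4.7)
The plan is to chain together three simple facts: a ``combinatorial'' bound on the diameter of $\UD(S_u)$, the triangle inequality in the plane, and the well-separation condition (\ref{equ:well-separated}).

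First I would establish $\diam(S_u) \leq |S_u|-1$. Since $T_u$ is a subtree of the Euclidean minimum spanning tree $T$ of $\UD(S)$, every edge of $T_u$ has Euclidean length at most $1$, so $T_u$ is a connected spanning subgraph of $\UD(S_u)$. Hence any two sites of $S_u$ are joined in $\UD(S_u)$ by a path using at most $|S_u|-1$ edges, each of Euclidean length $\leq 1$. This gives the first of the two claimed inequalities at once.

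Next I would derive $c(|S_u|-1) \leq d(s,t)$. Write $M = \max\{|S_u|-1,\;|S_v|-1\}$, so that (\ref{equ:well-separated}) reads $|\sigma(u)\sigma(v)| \geq (c+2)M$. Since Euclidean distance is dominated by any path length in $\UD(S)$, we have $|s\sigma(u)| \leq d(s,\sigma(u)) \leq \diam(S_u) \leq |S_u|-1 \leq M$, and similarly $|t\sigma(v)| \leq |S_v|-1 \leq M$. The planar triangle inequality then yields
\begin{equation*}
|st| \;\geq\; |\sigma(u)\sigma(v)| - |s\sigma(u)| - |t\sigma(v)|
     \;\geq\; (c+2)M - 2M \;=\; cM \;\geq\; c(|S_u|-1).
\end{equation*}
Finally, $d(s,t) \geq |st|$, since the shortest path in $\UD(S)$ is a polygonal path from $s$ to $t$ whose total Euclidean length is at least the straight-line distance. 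Combining this with the first inequality concludes the proof.

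There is no real obstacle here; the only subtle point is remembering that the Gao--Zhang condition (\ref{equ:well-separated}) simultaneously bounds the diameters of \emph{both} endpoints of the pair, which is exactly what supplies the ``$2M$'' that triangle-inequality slack consumes, leaving the desired factor $c$ rather than $c+2$.
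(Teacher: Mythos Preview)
Your proof is correct and follows essentially the same route as the paper: apply the triangle inequality in the plane to bound $|st|$ from below by $|\sigma(u)\sigma(v)|-2M$, invoke the well-separation condition~(\ref{equ:well-separated}) to turn this into $cM$, and then pass from $|st|$ to $d(s,t)$. The paper's write-up is just slightly terser, since the inequality $\diam(S_u)\leq |S_u|-1$ is already recorded immediately after~(\ref{equ:well-separated}).
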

\begin{proof}
 By triangle inequality and (\ref{equ:well-separated}) we have
\begin{align*}
 |st| & \geq |\sigma(u)\sigma(v)| -
 2 \max\{\diam(S_u), \diam(S_v)  \} \\
 & \geq (c+2) \max\{|S_u| - 1, |S_v| - 1\} -
 2 \max\{\diam(S_u), \diam(S_v)  \}.
\end{align*}
Since $|S_u| -1$ and $|S_v| - 1$ are upper bounds for
$\diam(S_u)$ and
$\diam(S_v)$, respectively, and since
$d(s,t) \geq |st|$,  the claim follows.
\end{proof}

The next lemma shows that short distances are represented
by singletons in the WSPD.

\begin{lemma}
\label{lem:closepairs}
Let $\Xi$ be a $c$-WSPD for $S$ and let $s, t \in S$ be two sites
with $d(s, t) < c$.
If $(u, v) \in \Xi$ represents $(s,t)$, then
$S_u = \{s\}$ and $S_v = \{t\}$.
\end{lemma}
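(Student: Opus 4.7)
The plan is to derive the conclusion as a short corollary of Lemma \ref{lem:diameter}. Inspecting the proof of that lemma, the bound
\[
d(s,t) \;\geq\; (c+2)\max\{|S_u|-1,|S_v|-1\} - 2\max\{\diam(S_u),\diam(S_v)\}
\]
is symmetric in $u$ and $v$, and using $\diam(S_u)\leq |S_u|-1$ and $\diam(S_v)\leq |S_v|-1$ it actually yields
\[
c\,\max\{|S_u|-1,|S_v|-1\} \;\leq\; d(s,t),
\]
which is the statement of Lemma \ref{lem:diameter} applied in both orientations.

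First I would invoke this strengthened form (or apply Lemma \ref{lem:diameter} twice, once to $(u,v)$ and once with the roles swapped) to obtain $c(|S_u|-1)\leq d(s,t)$ and $c(|S_v|-1)\leq d(s,t)$. Plugging in the hypothesis $d(s,t)<c$ gives $|S_u|-1<1$ and $|S_v|-1<1$, so $|S_u|=|S_v|=1$. Since by definition of ``$(u,v)$ represents $(s,t)$'' we have $s\in S_u$ and $t\in S_v$, this forces $S_u=\{s\}$ and $S_v=\{t\}$, as required.

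There is essentially no obstacle here: the entire argument is a one-line computation once Lemma \ref{lem:diameter} is in hand, and the only minor subtlety is to notice that Lemma \ref{lem:diameter} applies symmetrically to bound $|S_v|-1$ as well as $|S_u|-1$ (which is transparent from its proof, since condition (\ref{equ:well-separated}) itself is symmetric in $u$ and $v$).
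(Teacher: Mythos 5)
Your proof is correct and matches the paper's argument essentially verbatim: both apply Lemma~\ref{lem:diameter} (in both orientations, noting the symmetry) to get $c(|S_u|-1)\le d(s,t)<c$ and $c(|S_v|-1)\le d(s,t)<c$, forcing both sets to be singletons.
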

\begin{proof}
If $d(s,t) < c$, by Lemma~\ref{lem:diameter} we have
\begin{equation*}
c (|S_u| - 1)\leq d(s, t) < c,
\end{equation*}
and thus $|S_u| < 2$. The argument for $|S_v|$ is analogous.
\end{proof}

\section{The Routing Scheme}
Let $\vartheta$ be the density of $S$. First we describe a routing scheme
whose parameters depend on $\vartheta$. Then we show how to remove this dependency
and extend the scheme to work with arbitrary density.
Our routing scheme uses the WSPD
described in Section~\ref{sec:wspd}, and it is based on the following idea:
let $\Xi$ be the $c$-WSPD for $\UD(S)$ and let $T$ be the EMST for $S$
used to compute it.
We distribute the information about
the pairs in $\Xi$ among the sites in $S$ (in a way to be described
later) such that each
site stores $O(\vartheta c^2\log n)$ pairs in its routing table.

To route from
$s$ to $t$, we explore $T$, starting from
$s$, until we find the site $r$ with the pair $(u,v)$
representing $(s,t)$. Our scheme will guarantee that
$s$ and $r$ are sites in $S_u$, and therefore it suffices to walk
along $T_u$
to find $r$ (see Figure~\ref{fig:recursive}).
We call this first step in which we search for $(u,v)$
the \emph{local
routing}.

Together with the pair
$(u,v)$, we store in $\rho(r)$ the \emph{middle site} $m$  on the
shortest path from
$r$ to $\sigma(v)$, i.e., the vertex ``halfway'' between $r$ and
$\sigma(v)$.
Once we find $m$, we store $t$ at the header, and we
recursively route the packet from $r$ to $m$. When the packet
reaches $m$, we retrieve  $t$ from the header and continue the
routing
from $m$ to $t$. To keep track of intermediate targets during the
recursion,
we store a stack of targets in the header.
We call this second step that includes the
recursive routing through the middle site, the
\emph{global routing}.

\begin{figure}[htb]
 \centering
 \includegraphics[scale=0.8]{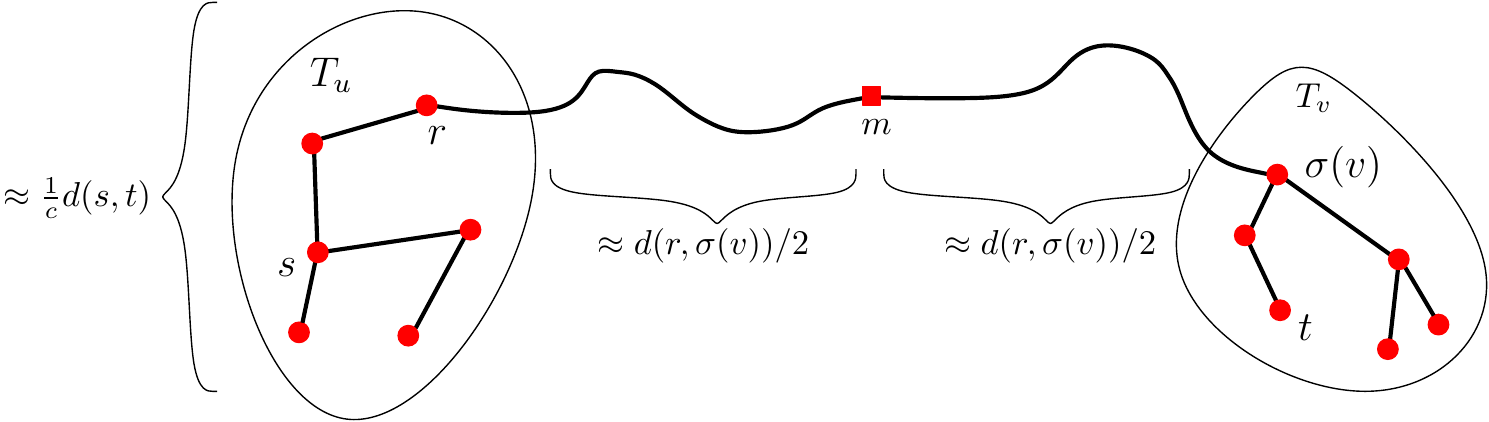}
 \caption{To route a packet from $s$ to $t$, we first walk along
 $T_u$ until
 we find $r$. Then we recursively route from $r$ to $m$ and from $m$
 to $t$.}
 \label{fig:recursive}
\end{figure}

We now describe our routing scheme in detail.
Let $1 + \eps$, $\eps > 0$, be the desired stretch
factor.

\subsection{Preprocessing}

The preprocessing phase works as follows.
We set $c = (\alpha/\eps)\log D$, where $D = \diam(S)$ and
$\alpha$ is a sufficiently large constant that we will
fix later. Then we
compute a
$c$-WSPD for $\UD(S)$.
As explained in Section~\ref{sec:wspd}, the WSPD consists
of a bounded degree spanning tree $T$ of $\UD(S)$,
a hierarchical balanced decomposition $H$ of $T$ whose
nodes $u \in H$ correspond to subtrees $T_u$ of $T$, and
a sequence $\Xi = (u_1, v_1), (u_2, v_2), \dots, (u_k, v_k)$
of $k = O(\vartheta c^2 n \log n) = O(\vartheta\eps^{-2}n \log n \log^2 D)$
well-separated
pairs that represent a partition of  $S \times S$.

First, we determine the \emph{labeling} $\ell$ for the sites in $S$.
This is done  as in the ``interval routing scheme''
of Santoro and Khatib~\cite{SantoroKhatibK85} for trees.
We perform a postorder traversal of $H$. Let $l$
be a counter which is initialized to $1$. Whenever
we encounter a leaf of $H$, we set the label $\ell(s)$
of the corresponding site $s \in S$ to $l$, and we increment
$l$ by $1$. Whenever we visit an internal node $u$ of $H$
for the last time, we annotate it with the interval $I_u$ of the
labels in $T_u$. Thus, a site $s \in S$ lies in
a subtree $T_u$ if and only if $\ell(s) \in I_u$.
Each label has $O(\log n)$ bits.

Next, we describe the routing tables.
Each routing table consists of two parts, the
\emph{local} routing table and the \emph{global} routing table.
The local routing table $\rho_L(s)$ of a site $s$ stores
the neighbors of $s$ in $T$, in counterclockwise order,
together with the levels in $H$ of the corresponding edges
(cf. Section~\ref{sec:wspd}).
Since $T$ has degree at most $6$,
each local routing table consists of $O(\log n)$
bits.

The global routing table  $\rho_G(s)$ of a site $s$ is obtained
as follows: we go through all $O(\log n)$ nodes $u$ of $H$ that
contain $s$ in their subtree $T_u$. By Lemma~\ref{lem:wspdpairs},
$\Xi$ contains at most $O(\vartheta c^2 |S_u|)$ well-separated pairs in which
$u$ represents one of the sets. We assign $O(\vartheta c^2) =
O(\vartheta \eps^{-2}\log^2 D)$ of these pairs to $s$, such that each
pair is assigned to exactly one site in $S_u$.
For each pair $(u, v)$ assigned to $s$, we store the interval $I_v$
corresponding to $S_v$. Furthermore, if $\sigma(v)$ is not
a neighbor of $s$, we store at $s$, together with the pair $(u, v)$, the
label $\ell(m)$ of the \emph{middle site} $m$ on a shortest
path $\pi$ from $s$ to $\sigma(v)$.
Formally, $m$ is a site on $\pi$ that minimizes the maximum
distance,  $\max\{d(s, m), d(m, \sigma(v))\}$,
to the endpoints of $\pi$.

\begin{lemma} \label{lem:table-size}
For every site $s$, $\rho_G(s)$ is of size
$O(\vartheta \eps^{-2}\log^2 n\log^2 D)$ bits.
\end{lemma}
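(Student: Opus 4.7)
The plan is to bound the table size by two independent accountings: first count the total number of WSPD pairs stored at $s$, then multiply by the bit-size of a single entry, and finally substitute the chosen value of the separation parameter $c$.

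For the first count, I would use Lemma~\ref{lem:wspdpairs} together with the structure of the hierarchical decomposition $H$. The site $s$ lies in exactly one subtree $T_u$ for every ancestor $u$ of the corresponding leaf in $H$, so $s$ belongs to $T_u$ for $O(\log n)$ distinct nodes $u$ (this uses the $O(\log n)$ height of $H$ established in Section~\ref{sec:wspd}). Fix such a $u$. By Lemma~\ref{lem:wspdpairs} the WSPD contains $O(\vartheta c^2 |S_u|)$ pairs in which $u$ appears, and the preprocessing distributes these pairs uniformly among the $|S_u|$ sites of $S_u$, so $s$ is charged $O(\vartheta c^2)$ pairs on account of $u$. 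Summing over the $O(\log n)$ ancestors, $s$ stores at most $O(\vartheta c^2 \log n)$ WSPD pairs in $\rho_G(s)$ overall.

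For the second count, each entry for a pair $(u,v)$ stores the interval $I_v$, which consists of two labels, and optionally the label $\ell(m)$ of the middle site on the shortest path to $\sigma(v)$. Since each label has $O(\log n)$ bits, one entry occupies $O(\log n)$ bits. Multiplying by the number of pairs gives a total of $O(\vartheta c^2 \log^2 n)$ bits per routing table. Substituting $c = (\alpha/\eps)\log D$ as specified in the preprocessing yields the claimed bound $O(\vartheta \eps^{-2} \log^2 n \log^2 D)$.

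There is no real obstacle here; the only point to be careful about is that the uniform distribution of pairs across $S_u$ actually yields $O(\vartheta c^2)$ pairs per site rather than more. This is immediate from $\lceil \vartheta c^2 |S_u|/|S_u|\rceil = O(\vartheta c^2)$, noting that the ceiling contributes an additive $1$ which is absorbed into the constant. Everything else is a routine multiplication of the two factors and the substitution of $c$.
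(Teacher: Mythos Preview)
Your proposal is correct and follows essentially the same approach as the paper's proof: count the $O(\log n)$ ancestors of $s$ in $H$, note that each contributes $O(\vartheta c^2)=O(\vartheta\eps^{-2}\log^2 D)$ pairs to $\rho_G(s)$, and multiply by the $O(\log n)$ bits needed per entry. Your write-up merely spells out a few steps (the per-site distribution of the $O(\vartheta c^2|S_u|)$ pairs and the substitution of $c$) that the paper leaves implicit.
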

\begin{proof}
A site $s$ lies in $O(\log n)$ different sets $S_u$, at most one for each
level of $H$. For each such set, we store $O(\vartheta\eps^{-2}\log^2 D)$ pairs in
$\rho_G(s)$, each of
which requires $O(\log n)$ bits.
\end{proof}

Finally, we argue that the routing scheme can be computed efficiently.
Our preprocessing algorithm proceeds in a centralized fashion and
processes the whole graph to determine the routing table for each node.
\begin{lemma}
\label{lem:preprocessingtime}
The preprocessing time for the routing scheme described above is
$O(n^2\log n + \vartheta n^2 +  \vartheta \eps^{-2}n\log n \log^2 D)$.
\end{lemma}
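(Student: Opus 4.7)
The plan is to bound the time for each phase of the preprocessing described above and verify that every phase fits into one of the three terms of the claimed bound. First, the EMST $T$ can be constructed in $O(n\log n)$ time by standard algorithms, and the hierarchical decomposition $H$ can be built from $T$ by repeatedly locating a balanced-separator edge; since every level can be produced in linear time and $H$ has logarithmic height, this costs an additional $O(n\log n)$ time. Running the Gao--Zhang algorithm on $T$ and $H$ then outputs the $c$-WSPD $\Xi$ in time proportional to its size, which by Lemma~\ref{lem:wspdpairs} is $O(\vartheta c^2 n\log n) = O(\vartheta\eps^{-2} n\log n\log^2 D)$. Assigning the labels $\ell$ via a postorder traversal of $H$ takes $O(n)$ time, and filling all local routing tables $\rho_L(s)$ takes $O(n\log n)$ time overall since $T$ has maximum degree $6$ and each of its edges already knows its level in $H$.

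The main obstacle is the computation of the middle sites stored in the global routing tables, since this requires shortest-path information in $\UD(S)$. Because every unit disk contains at most $\vartheta$ sites, each vertex of $\UD(S)$ has degree at most $\vartheta-1$, so $\UD(S)$ has $O(\vartheta n)$ edges and can be built by a simple grid sweep in $O(n\log n + \vartheta n)$ time. A single Dijkstra invocation then costs $O(n\log n + \vartheta n)$, and executing Dijkstra from each of the $n$ sources produces all shortest-path trees in $O(n^2\log n + \vartheta n^2)$ time in total. From each shortest-path tree rooted at a source $s$, the middle site for every target $t$ can be extracted in $O(n\log n)$ additional time: traverse the tree in DFS order, maintain the current source-to-node path as a stack annotated with distances from $s$, and for each visited target binary-search the stack for the ancestor whose distance to $s$ is closest to $d(s,t)/2$. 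Summed over all sources this precomputation fits within the $O(n^2\log n)$ term.

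Finally, I would assemble the global routing tables in time proportional to the size of $\Xi$. For each node $u\in H$, Lemma~\ref{lem:wspdpairs} bounds by $O(\vartheta c^2|S_u|)$ the pairs in which $u$ appears, so a single traversal of $H$ together with $\Xi$ suffices to distribute the pairs among the sites of $S_u$ so that each site receives $O(\vartheta c^2)=O(\vartheta\eps^{-2}\log^2 D)$ of them. For every pair $(u,v)$ assigned to a site $s$, the interval $I_v$ and the label of the precomputed middle site on the shortest path from $s$ to $\sigma(v)$ are written in $O(1)$ word operations, each label occupying a single $\Theta(\log n)$-bit word. Thus the entire global-table construction costs $O(\vartheta\eps^{-2} n\log n\log^2 D)$ time. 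Combining all contributions yields the claimed bound $O(n^2\log n + \vartheta n^2 + \vartheta\eps^{-2} n\log n\log^2 D)$.
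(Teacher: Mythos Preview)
Your proof is correct and follows essentially the same approach as the paper: bound the WSPD construction by $O(\vartheta\eps^{-2}n\log n\log^2 D)$, run Dijkstra from every source to obtain all shortest-path trees in $O(n^2\log n+\vartheta n^2)$, and extract all middle sites from each tree in $O(n\log n)$. The only notable difference is in this last step: the paper performs a post-order traversal of the shortest-path tree while maintaining mergeable max-heaps of unresolved descendants, whereas you keep the root-to-current-node path on a stack and binary-search it for each target; both yield $O(n\log n)$ per source and are equally valid.
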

\begin{proof}
The $c$-WSPD can be computed in
$O(\vartheta  c^2 n \log n) =
O(\vartheta \eps^{-2} n \log n \log^2 D)$ time~\cite{GaoZh05}.
Within the same time bound, we can
distribute the WSPD-pairs to the sites in $S$ and compute the labels
for $S$.

It remains to compute the middle sites; we do this for all
pairs $(s,t) \in S\times S$ as follows: we first compute $\UD(S)$
explicitly.
Since $S$ has density $\vartheta$, we
have $O(\vartheta n)$ edges in $\UD(S)$,
and we can compute it naively in time $O(n^2)$.
For each $s \in S$, we compute
the shortest path tree $\mathcal{T}$ with root $s$. This takes total
time $O(n^2 \log n + \vartheta n^2)$, using $n$ invocations of
Dijkstra's algorithm.

\begin{figure}[htb]
\centering
\includegraphics{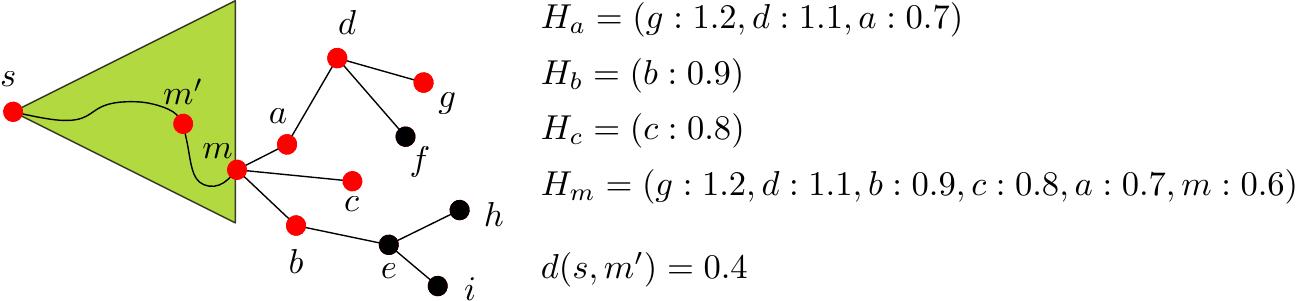}
\caption{$m$ is the middle site for $g$ and $d$.
$m'$ is a better middle site for vertices at least as close
to $m$ as $b$.}
\label{fig:preprocessing}
\end{figure}

For each $s \in S$, we perform a post-order
traversal of the shortest path tree  $\mathcal{T}$
to find the middle sites for all $s$-$t$-paths.
First, for each leaf $t$ of $\mathcal{T}$, we create a max-heap
that contains $t$ with $d(s,t)$ as the key.
We now describe how to process a site $m$ during the traversal.
First, we merge the heaps of all children of $m$ into a new heap $H_m$
and we insert $m$ into $H$ with $d(s, m)$ as key, see
Figure~\ref{fig:preprocessing}.
During the traversal, we maintain the invariant that $H_m$ contains all
sites
that are descendants of $m$ in $\mathcal{T}$ for which we have not yet
found a middle site. Furthermore, since $d(s,t)$ increases monotonically
along
every root-leaf path in $\mathcal{T}$, the sites for
which $m$ might be the middle site are a prefix of the decreasingly
sorted distances $d(s,t)$ with $t \in H$. Thus, to find the sites in
$H_m$ for which $m$ is the middle site, we repeatedly perform an
extract-max operation on $H_m$ to obtain the next candidate $t$. Then,  we
compare the value of
$\max \{ d(s,m),d(m,t)\}$ with $\max\{d(s,m'),d(m',t)\}$, where $m'$ is the
parent of $m$ in $\mathcal{T}$. That is, we check if $m'$ is a ``better''
middle site than $m$. If not, $m$ must be the middle site for $s$-$t$.
Otherwise, $m$ cannot be the middle site for any other site in $H_m$, and
we
proceed with our traversal.
Using, e.g., Fibonacci Heaps, we can merge two heaps in $O(1)$ time and
perform an extract-max operation in $O(\log n)$ amortized
time~\cite{Cormen09}.
Since each element of $\mathcal{T}$ is inserted and extracted at most
once, we need $O(n\log n)$
time to find the middle sites for $s$.
Thus, we can find all middle sites in time $O(n^2\log n)$ and
the total
preprocessing time is
$O(n^2\log n + \vartheta n^2 +  \vartheta \eps^{-2}n\log n \log^2 D)$.
\end{proof}

\subsection{Routing a Packet}
\label{sec:routing}

Suppose we are given two sites $s$ and $t$,
and we would like to route a packet from $s$ to $t$.
Recall our overall strategy:
we first perform a local exploration of $\UD(S)$ in
order to discover a site $r$ that stores a pair $(u, v) \in \Xi$
representing $(s,t)$ in its global routing table $\rho_G(r)$.
To find $r$, we consider the subtrees
of $T$ that contain $s$ by increasing size,
and we perform an Euler tour in each subtree until we find $r$.
In $\rho_G(r)$ we have stored the middle site $m$ of a shortest path
from $r$ to $\sigma(v)$.
We put $t$ into the header,
and we recursively route
the packet from $r$ to $m$. Once we reach $m$, we retrieve the original
target $t$ from the header and
recursively route from $m$ to $t$, see Algorithm~\ref{alg:routing}
for pseudo-code.

\paragraph{Local Routing: The Euler-Tour.}
We start at $s$, and we would like to find the site $r$ that stores
the pair
$(u,v)$ representing $(s,t)$.
By construction, both $s$ and
$r$ are contained in $S_u$, and it suffices to perform an Euler tour
on $T_u$ to discover $r$.
Since we do not know $u$ in advance, we begin with the parent of
the leaf in $H$ that contains $s$, and we explore all
nodes on the path to the root until we find $u$ (see
Figure~\ref{fig:local-routing}).

\begin{figure}[htb]
\centering
\includegraphics[scale=0.8]{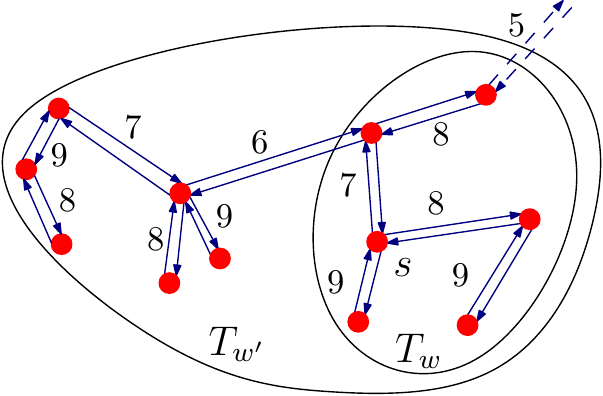}
 \caption{To find $r$ we do an Euler Tour on $T_w$, the subtree
 that contains $s$ whose edges have level at least $7$. Since we do not
 find $r$, we search the next larger subtree $T_{w'}$, where $w'$ is the
 parent of $w$ in $H$ by decreasing the search level to $6$.}
 \label{fig:local-routing}
\end{figure}

Let $w \in H$ be the node to be explored,
and let $l = \delta(w)$ be the depth of $w$ in $H$.
The header $h$ contains
$l$,
the current level being explored,
and $e$,
the (directed) \emph{start edge}.
The start edge $e$ is the unique edge of level $\delta(s) - 1$
incident to $s$, i.e., the edge associated with the parent of $s$ in $H$.
Recall that $T_w$ is a connected
component of the forest induced by all edges
of level at least $l$. We perform an Euler tour
on $T_w$ using the local routing tables.

To begin the tour, we follow the start edge $e$ from $s$.
This edge is contained in all non-trivial subtrees containing $s$.
Throughout the search, we maintain in the header the  previous vertex
before the current vertex, i.e.,
when we reach a vertex $r$  through the edge $(r',r)$, the header contains the
previous vertex $r'$.

Every time we visit a site $r$, we check for all WSPD-pairs $(u,v)$ in
$\rho_G(r)$ whether $\ell(t) \in I_v$, i.e., whether $t \in S_v$. If so,
we clear the local routing information from $h$, and we proceed
with the global routing.
If not, we retrieve from the header the vertex $r'$ through
which we just arrived to $r$ and we scan $\rho_L(r)$ to find
the first edge $(r,r'')$ of level at least $l$ in clockwise order after $(r, r')$,
going back to the beginning of
$\rho_L(r)$ if necessary.
If the edge $(r, r'')$ is different from the start edge $e$,
then $(r,r'')$ is next edge of the Euler tour. We remember
$r$ as the previous vertex in the
header, and we proceed to $r''$.
Otherwise, if $(r, r'') = e$, the Euler tour of $T_w$ tries to
traverse the start edge for a second time.
This means that $T_w$ does not contain the desired middle site.
We decrease $l$ by one, and we again follow the start edge.
Decreasing $l$ corresponds to proceeding to the parent of $w$ in $H$ and hence
to the next larger subtree.

\paragraph{Global Routing: The WSPD.}
Suppose we are at a site $s$ such that $\rho_G(s)$ contains the pair
$(u,v)$ with the target $t$ being in $S_v$.
If $t$ is not a neighbor
of $s$, then $\rho_G(s)$ must contain the label of a middle site $m$
for $(u,v)$.\footnote{By Lemma~\ref{lem:closepairs} if $t$ is not a neighbor of $s$ then $\sigma(v)$
cannot be a neighbor of $s$, and therefore $m$ must exist.}
We push
(the label of) $t$ onto the header stack,  we set
$m$ to be the new target and we apply the routing function again
(with  $\ell(m)$ and the new header).
If $t$ is a neighbor of $s$, we go directly to $t$ without changing
the target and the header.

When we reach the current destination and the stack at the header is not
empty then we pop the next target from the header and apply the routing
function again. Otherwise, the current destination is the final one and
the routing is complete.

\begin{algorithm}[htb]
\DontPrintSemicolon
\KwIn{currentSite $s$, targetLabel $\ell(t)$, header $h$}
\KwOut{nextSite, nextTargetLabel, header}
  \If(\tcc*[f]{intermediate target reached?}){$\ell(s) = \ell(t)$} {
      \If(\tcc*[f]{final target?}){$h.\textnormal{stack} = \emptyset$} {
    \Return $(s,  \perp, \perp)$
      }
      \Else {
    \Return $(s, h.\text{stack.pop}(), h)$
\label{line:pop}
      }
  } \ElseIf(\tcc*[f]{global routing}){$\rho(s)$ \textnormal{stores a WSPD-pair}
$(u,v)$ \textnormal{with} $t \in S_v$}{
    $h.\text{startEdge} \gets \emptyset$\;
    \If{$s \textnormal{ and }  t \textnormal{ are neighbors in }
      \UD(S)$}{
    \Return $(t, \ell(t), h)$
    }
    \Else{
      $\text{nextTargetLabel} \gets \text{label of middle site
      for } (u,v)$\;
    $h.\text{stack.push}(\ell(t))$\;
        \Return $(s, \text{nextTargetLabel}, h)$
    }
  }
  \Else(\tcc*[f]{local routing}){
    \If(\tcc*[f]{initialize local routing}){h.\textnormal{startEdge} $  = \emptyset$}{
      $h.\text{level} \gets \delta(s) - 1$\;
      $r \gets$ neighbor of $s$ with level of edge
        $sr = h.\text{level}$\;
      $h.\text{startEdge} \gets sr$\;
    } \Else{
      $r \gets$ clockwise neighbor of $s$ following $h.\text{prevVertex}$ with level of
         $sr \geq h.\text{level}$\;
        \If(\tcc*[f]{Euler tour has finished tree}){$sr = h.\text{startEdge}$}{
          $h.\text{level} \gets h.\text{level} - 1$
          \tcc*{try next level}
        }
    }
    $h.\text{prevVertex} \gets s$\;
    \Return $(r,\ell(t),h)$
  }
\caption{The routing function.
In the resutling triple the source is either the current
vertex $s$, in which case we apply the routing function again at $s$, or
it is a neighbor $s'$ of $s$, in which case the packet is sent to $s'$.}
\label{alg:routing}
\end{algorithm}

\subsection{Analysis of the Routing Scheme}
We now prove that the described routing scheme is correct and has low
stretch, i.e., that
for any two sites $s$ and $t$, it produces a routing path $s = s_0, \dots, s_k = t$
of length at most $(1+\eps)d(s,t)$.

\subsubsection{Correctness}
We now prove that our scheme is correct.
For small distances, the routing path is actually
optimal.

\begin{lemma}
\label{lem:correctness}
Let $s, t$ be two sites in $S$. Then, the routing
scheme produces a routing path $s_0, s_1, \dots, s_k$ with the following
properties
\begin{enumerate}[(i)]
 \item  $s_0 = s$ and $s_k = t$,
 \item  the header stack is in the same
state at the beginning and at the end of the routing path, and
\item if $d(s,t) < c$, then $d_\rho (s,t) = d(s,t)$.
\end{enumerate}
\end{lemma}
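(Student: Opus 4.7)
The plan is to establish (i)--(iii) simultaneously by strong induction on the hop distance $d_h(s,t)$ in $\UD(S)$. For the base case $s=t$, the first branch of Algorithm~\ref{alg:routing} fires, so no movement occurs; if the stack is non-empty its top element becomes the new intermediate target (line~\ref{line:pop}), which exactly matches the push performed by the caller, and otherwise the routing terminates. All three properties hold trivially at this level.

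For the inductive step I would split the argument into analyses of the local and global phases. First, I would show that the local phase reaches a well-defined site $r$. Let $(u,v) \in \Xi$ be the unique WSPD pair representing $(s,t)$ and let $r \in S_u$ be the unique site to which the preprocessing assigned this pair. Since $s,r \in S_u$, I would argue that the Euler-tour mechanism, starting from $s$ along the start edge of level $\delta(s)-1$ and iteratively relaxing the level threshold, first completes an Euler tour of each intermediate subtree $T_w \supsetneq \{s\}$ and then, once the level threshold equals $\delta(u)$, performs an Euler tour of $T_u$ that necessarily visits $r$. The key invariant is that the current subtree's tour terminates exactly when the start edge is traversed a second time, and the start edge itself stays fixed across all level decrements. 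Since the local phase performs no stack operations, property~(ii) is preserved throughout.

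Once $r$ is reached, the global phase proceeds as follows. If $t$ is a neighbor of $r$ in $\UD(S)$, the algorithm makes a single hop to $t$ and returns, leaving the stack unchanged. Otherwise, by the footnote and Lemma~\ref{lem:closepairs}, a middle site $m$ is stored in $\rho_G(r)$. The algorithm pushes $\ell(t)$, routes recursively from $r$ to $m$ with the augmented stack, pops $\ell(t)$ upon arrival at $m$ (restoring the original stack), and routes recursively from $m$ to $t$. By the inductive hypothesis each sub-routing terminates at its intended destination with its stack state matching its initial state, which yields (i) and (ii) at the current level. For the inductive hypothesis to apply I would establish $d_h(r,m), d_h(m,t) < d_h(s,t)$ by combining Lemma~\ref{lem:diameter} (which forces $|S_u|$ to be small relative to $d(s,t)/c$, and hence $r$ close to $s$) with the defining property of $m$ as a vertex that splits a shortest $r$-to-$\sigma(v)$ path roughly in half.

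For property~(iii), suppose $d(s,t) < c$. Lemma~\ref{lem:closepairs} yields $S_u = \{s\}$ and $S_v = \{t\}$, so $(u,v)$ is assigned to $s$ itself, $r = s$, and the local phase makes no moves. If $t$ is a neighbor of $s$ then $d_\rho(s,t) = |st| = d(s,t)$. Otherwise $\sigma(v) = t$ and the stored middle site $m$ lies strictly between $s$ and $t$ on a shortest path, so $d(s,m) + d(m,t) = d(s,t)$ with both summands strictly smaller than $d(s,t) < c$. The inductive hypothesis for (iii) applied to both sub-routings gives $d_\rho(s,m) = d(s,m)$ and $d_\rho(m,t) = d(m,t)$, whose sum equals $d(s,t)$. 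The main obstacle I anticipate is the distance bookkeeping needed to guarantee strict decrease of the inductive measure in the general case of the global phase: one must combine the small-diameter guarantee of Lemma~\ref{lem:diameter} with the half-splitting property of the middle site to squeeze both $d_h(r,m)$ and $d_h(m,t)$ strictly below $d_h(s,t)$. The correctness of the Euler tour, while notationally intricate because of the start-edge/level-decrement mechanism, is structurally routine.
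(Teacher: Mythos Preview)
Your overall architecture—local Euler-tour phase leaving the stack untouched, followed by global recursion through the middle site with a push/pop discipline—matches the paper's, and your treatment of property~(iii) via Lemma~\ref{lem:closepairs} is exactly right. The gap is your choice of induction measure.

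Inducting on the hop distance $d_h(s,t)$ does not work in the general case $d(s,t) \geq c$. The middle site $m$ halves the \emph{weighted} distance along a shortest $r$--$\sigma(v)$ path, so one obtains $d(r,m), d(m,t) \leq \frac{5}{8}\bigl(1+O(1/c)\bigr)\,d(s,t) < d(s,t)$ (this is essentially Lemma~\ref{lem:middlesite}). But hop distance and weighted distance are only loosely coupled in a unit disk graph: one has $d(a,b) \leq d_h(a,b) \leq 2d(a,b)+1$, and the slack in that upper bound means $d(r,m) < d(s,t)$ does \emph{not} force $d_h(r,m) < d_h(s,t)$. Concretely, if the shortest $s$--$t$ path uses only edges of length close to~$1$ (so $d_h(s,t) \approx d(s,t)$) while every shortest $r$--$m$ path is forced through a region of short edges (so $d_h(r,m)$ is close to $2d(r,m) \approx d(s,t)$), the hop count need not drop at all. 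Your ``main obstacle'' is therefore not just bookkeeping: the inequality you need can genuinely fail, and neither Lemma~\ref{lem:diameter} nor the half-splitting property of $m$ says anything about hop counts.

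The paper sidesteps this by inducting on the \emph{rank} of $d(s,t)$ among all pairwise weighted distances in $\UD(S)$. Then the strict inequalities $d(r,m) < d(s,t)$ and $d(m,t) < d(s,t)$, which do follow from Lemma~\ref{lem:diameter} and the middle-site property, immediately give smaller rank, and the induction goes through. Switching your measure from $d_h$ to the rank of $d$ fixes the argument with no other structural changes.
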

\begin{proof}
 We prove that our routing scheme has properties (i)--(iii)
by induction on the rank of
$d(s, t)$ in the sorted list of the pairwise
distances in $\UD(S)$.

For the base case, consider the edges $st$ in $\UD(G)$, i.e.,
$d(s,t) = |st| \leq 1$. By Lemma~\ref{lem:closepairs},
there exists a pair $(u,v) \in \Xi$ with
$S_u = \{s\}$ and $S_v = \{t\}$. Thus,
Algorithm~\ref{alg:routing} correctly routes to $t$ in one step.
It uses a shortest path and does not manipulate the header stack.
All properties are fulfilled.

Next, consider an arbitrary pair $s,t$ with $1 < d(s,t) < c$.
By Lemma~\ref{lem:closepairs}, there is a pair $(u,v) \in \Xi$ with
$S_u = \{s\}$ and
$S_v = \{t\}$. By construction, $(u,v)$ is stored in $\rho_G(s)$ and
the routing algorithm directly proceeds to the global routing phase.
Since $d(s, t) > 1$, the routing table contains a middle site
$m$ and since $S_u$ and $S_v$ are singletons, $m$ is a
middle site on a shortest path from $s$ to $t$.
Algorithm~\ref{alg:routing} pushes $\ell(t)$ onto the stack and sets $m$
as the
new target. By induction, the routing scheme now routes the packet along
a shortest path from $s$ to $m$ (items i and iii of the induction hypothesis), and when the packet arrives at
$m$, the target label $\ell(t)$ is at the top of the stack (item ii).
Thus, Algorithm~\ref{alg:routing} executes line~\ref{line:pop},
and routes the packet from $m$ to $t$. Again by induction,
the packet now follows a shortest path from $m$ to $t$ (i, iii), and
when the packet arrives at $t$, the stack is in the same state
a before pushing $\ell(t)$ (ii).
The claim follows.

Finally, consider an arbitrary pair $s,t \in S$ such that $d(s, t) \ge c$.
By construction, our routing scheme will eventually find
a site $r \in S$ whose global routing table stores a WSPD-pair $(u, v)$
that represents $(s, t)$. Up to the point in which we reach $r$
the stack remains unchanged (see Figure~\ref{fig:stack}).
If $\sigma(v)$ is a neighbor of $r$ then by
Lemma~\ref{lem:diameter}, $|S_u| = |S_v| = 1$. So
$\sigma(v) = t$ and the packet arrives to $t$ from $r$ in a single step
with the header in its original state.
Otherwise, there is a middle site $m$
associated with $(u, v)$ in $\rho(r)$.

Algorithm~\ref{alg:routing} pushes $\ell(t)$ onto the stack and
sets $m$ as the
new target. By induction, the routing scheme routes the packet correctly
from $s$ to $m$ (i), and when the packet arrives at
$m$, the target label $\ell(t)$ is at the top of the stack (ii).
Thus, Algorithm~\ref{alg:routing} executes line~\ref{line:pop},
and routes the packet from $m$ to $t$. Again by induction,
the packet arrives at $t$, with the stack in the same state as before pushing
$\ell(t)$ (i, ii).
\end{proof}

\begin{figure}[htb]
\centering
\includegraphics[scale=0.9]{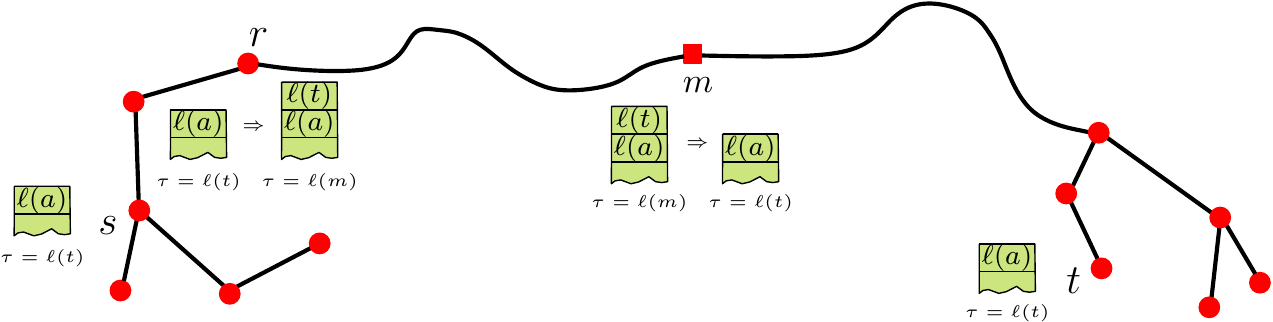}
\caption{How the stack (green) and the target label $\tau$ change due to the
global routing.}
\label{fig:stack}
\end{figure}

\subsubsection{Stretch factor}
The analysis of the stretch factor requires some more technical
work. We begin with a lemma that justifies the term
``middle site''.

\begin{lemma}
\label{lem:middlesite}
Let $s,t$ be two sites in $S$ with $d(s, t) \geq c \geq 13$
and let $(u,v) \in \Xi$ be the WSPD-pair that represents
$(s,t)$.
If $m$ is a middle
site of a shortest path from $s$ to $\sigma(v)$ in $\UD(S)$, then
\begin{enumerate}[(i)]
 \item $d(s,m) + d(m,t) \leq \left(1+\frac{2}{c}\right)d(s,t)$, and
 \item $d(s,m),d(m,t) \leq \frac{5}{8} d(s,t)$.
\end{enumerate}
\end{lemma}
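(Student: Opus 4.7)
The plan is to reduce both parts of the lemma to two basic ingredients: a tight bound on $d(t,\sigma(v))$ coming from well-separation, and the observation that the ``middle'' of a $\UD(S)$-shortest path of length $L$ lies within an additive $1/2$ of its true midpoint because every edge of $\UD(S)$ has Euclidean length at most $1$.

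Setting $L := d(s,\sigma(v))$, I would first dispose of the side quantity $d(t,\sigma(v))$. Since $(u,v)$ represents $(s,t)$, we have $t \in S_v$, so $d(t,\sigma(v)) \leq \diam(S_v) \leq |S_v| - 1$. Applying Lemma~\ref{lem:diameter} symmetrically (swapping the roles of $u$ and $v$) gives $|S_v| - 1 \leq d(s,t)/c$. By the triangle inequality, this already yields $L \leq d(s,t) + d(t,\sigma(v)) \leq (1 + 1/c)\,d(s,t)$.

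For~(i), because $m$ lies on a shortest $s$--$\sigma(v)$ path, we have the exact identity $d(s,m) + d(m,\sigma(v)) = L$. Chaining triangle inequalities through $\sigma(v)$ then gives
\[
d(s,m) + d(m,t) \;\leq\; d(s,m) + d(m,\sigma(v)) + d(\sigma(v),t) \;=\; L + d(\sigma(v),t) \;\leq\; d(s,t) + 2\,d(t,\sigma(v)),
\]
which is at most $(1 + 2/c)\,d(s,t)$ by the bound on $d(t,\sigma(v))$ above.

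For~(ii), the key observation is that the cumulative distances $d(s,m_0), d(s,m_1), \dots, d(s,m_k)$ along the vertices of $\pi$ form an increasing sequence from $0$ to $L$ whose successive gaps equal Euclidean edge lengths and are therefore at most $1$. Choosing the $m_i$ whose cumulative distance is closest to $L/2$ yields $\max\{d(s,m), d(m,\sigma(v))\} \leq L/2 + 1/2$. Combining this with $L \leq (1+1/c)\,d(s,t)$ and $d(t,\sigma(v)) \leq d(s,t)/c$, I would then estimate $d(s,m) \leq L/2 + 1/2$ and $d(m,t) \leq L/2 + 1/2 + d(s,t)/c$, and convert each into a multiplicative fraction of $d(s,t)$ by absorbing the additive $1/2$ into $d(s,t)/(2c)$ using the hypothesis $d(s,t) \geq c$. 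The main technical obstacle is precisely this absorption: the additive $\tfrac{1}{2}$ slack inherent to discrete midpoint selection only becomes a multiplicative loss once $c$ is large enough, and this is where the assumption $c \geq 13$ enters to secure the clean constant $\tfrac{5}{8}$.
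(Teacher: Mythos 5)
Your proof of part~(i) is correct and is exactly the paper's argument: chain the triangle inequality through $\sigma(v)$, use the middle-site identity $d(s,m)+d(m,\sigma(v)) = d(s,\sigma(v))$, and bound $d(\sigma(v),t) \leq \diam(S_v) \leq d(s,t)/c$ via Lemma~\ref{lem:diameter}.

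For part~(ii), your plan starts with the same key observation as the paper (the discrete midpoint is within an additive $\tfrac12$ of the true midpoint, so $\max\{d(s,m),d(m,\sigma(v))\} \leq L/2 + 1/2$ with $L = d(s,\sigma(v)) \leq (1+1/c)\,d(s,t)$), but the way you then pass from $d(m,\sigma(v))$ to $d(m,t)$ is too lossy to reach $5/8$ at $c=13$. Concretely, carrying out exactly the absorption you describe gives
\begin{align*}
 d(m,t) &\leq \frac{L}{2} + \frac{1}{2} + \frac{d(s,t)}{c}
        \leq \left(1+\frac1c\right)\frac{d(s,t)}{2} + \frac{d(s,t)}{2c} + \frac{d(s,t)}{c}
        = \left(\frac12 + \frac{2}{c}\right) d(s,t),
\end{align*}
and $\tfrac12 + \tfrac{2}{c} \leq \tfrac58$ requires $c \geq 16$; at $c=13$ it is $17/26 > 5/8$. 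So the sentence claiming that $c\geq 13$ already ``secures the clean constant $5/8$'' is not supported by the estimates you set up. The paper instead turns $d(m,\sigma(v))$ into $d(m,t)$ \emph{multiplicatively}: from $d(\sigma(v),t) \leq d(s,t)/c$ one derives $(1-1/c)\,d(m,t) \leq d(m,\sigma(v))$, hence $\max\{d(s,m),d(m,t)\} \leq \bigl(1+\tfrac{1}{c-1}\bigr)\max\{d(s,m),d(m,\sigma(v))\}$, and the resulting bound $\tfrac12 + \tfrac{3}{2(c-1)}$ equals $5/8$ exactly at $c=13$. If you keep your additive route, you should either strengthen the estimate on $d(m,t)$ in this multiplicative way or restate the hypothesis as $c\geq 16$; as written there is a genuine (though small and purely numerical) gap in~(ii).
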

\begin{proof}
For (i) we have
\begin{align*}
 d(s,m) + d(m,t) &\leq d(s,m) + d(m,\sigma(v)) + d(\sigma(v),t)&&\text{(triangle inequality)}\\
 &= d(s,\sigma(v)) + d(\sigma(v),t) &&\text{($m$ is middle site)}\\
 & \leq d(s,t) + 2d(\sigma(v),t)&& \text{(triangle inequality)}\\
&\leq \left(1 + \frac{2}{c}\right)d(s,t),&& \text{(Lemma~\ref{lem:diameter})}
\end{align*}
 where the last inequality also uses the fact
that $d(\sigma(v),t) \leq \diam(S_v)$.

For (ii) let $\pi$ be a shortest path from $s$ to $\sigma(v)$ that contains
$m$, and let $m'$ be the point on $\pi$ with distance
$d(s, \sigma(v))/2$ from $s$ and from $\sigma(v)$ ($m'$ may lie on an
edge of $\pi$). Since the edges of $\pi$ have length at most
$1$ and since $m$ is the middle site, we have
$d(m, m') = |mm'| \leq 1/2$. Hence,
\begin{equation}
\label{equ:max}
\max \{d(s, m), d(m, \sigma(v))\}  \leq  \frac{d(s, \sigma(v))}{2} + \frac{1}{2}.
\end{equation}
Using triangle inequality and Lemma~\ref{lem:diameter} we get
\begin{equation}\label{equ:dmmt}
  \left(1-\frac{1}{c}\right)d(m,t) \leq d(m,\sigma(v))
\end{equation}
and
\begin{equation}\label{equ:dssv}
d(s,\sigma(v)) \leq \left(1+\frac{1}{c}\right)d(s,t).
\end{equation}
Using (\ref{equ:max}), (\ref{equ:dmmt}), and (\ref{equ:dssv}) we can derive
\begin{align*}
 \max \{d(s,m),d(m,t)\} &
 \leq \left(1+\frac{1}{c-1}\right) \max\{d(s,m),d(m,\sigma(v))\}&& \text{(by (\ref{equ:dmmt}))}\\
 & \leq \left(1+\frac{1}{c-1}\right)\left(\frac{d(s, \sigma(v))}{2} + \frac{1}{2}\right)&& \text{(by (\ref{equ:max}))}\\
 & \leq \left(1+\frac{1}{c-1}\right)\left(\left(1+\frac{1}{c}\right)\frac{d(s, t)}{2} + \frac{1}{2}\right) && \text{(by (\ref{equ:dssv}))}\\
 & \leq \left(1+\frac{1}{c-1}\right)\left(1+\frac{2}{c}\right)\frac{d(s,t)}{2}&& \text{(*)} \\
 & \leq \left(\frac{1}{2}+\frac{3}{2c-2}\right)d(s,t),&& \\
\end{align*}
where (*) is due to the assumption that $d(s,t) \geq c$. Now (ii) follows
from the assumption that $c \geq 13$.
\end{proof}
In the next lemma, we bound the distance traveled during the local routing.

\begin{lemma}
\label{lem:localroutingdistance}
Let $s,t$ be two sites in $S$ with $d(s,t) \geq c$.
Then, the total distance traveled by the packet during the local routing
phase before the
WSPD-pair representing $(s,t)$ is discovered, is at most
$(48/c) d(s,t)$.
\end{lemma}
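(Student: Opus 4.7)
The plan is to view the local routing as a sequence of (at most completed) Euler tours of a chain of nested subtrees, bound the length of each tour in terms of its vertex count, and then sum geometrically using the balanced decomposition.

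Let $\lambda$ denote the leaf of $H$ that corresponds to $s$, and let $w_0, w_1, w_2, \dots$ be the ancestors of $\lambda$ in $H$, starting with $w_0$, the parent of $\lambda$. Since the pair $(u,v) \in \Xi$ representing $(s,t)$ is stored at a site $r \in S_u$ and since $s \in S_u$ (because $(u,v)$ represents $(s,t)$), the node $u$ is an ancestor of $\lambda$; hence $u = w_j$ for some $j \geq 0$. According to the algorithm, we first perform an Euler tour of $T_{w_0}$, starting by following the start edge $e$ out of $s$; whenever we return to $s$ and try to traverse $e$ a second time without having found the correct pair, we decrement the search level and \emph{restart} the Euler tour of the next tree $T_{w_{i+1}}$ (again beginning with the edge $e$). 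By construction, by the time we start the Euler tour of $T_u = T_{w_j}$ we are guaranteed to discover $r$, so the local routing terminates at the latest inside this last tour.

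Because every edge of the EMST $T$ has Euclidean length at most $1$ (otherwise it would not lie in $\UD(S)$), an Euler tour of $T_{w_i}$ traverses $2(|S_{w_i}|-1)$ unit-bounded edges and therefore has total length at most $2(|S_{w_i}|-1)$. Hence the distance traveled during the local routing is at most
\begin{equation*}
\sum_{i=0}^{j} 2(|S_{w_i}|-1).
\end{equation*}
Next I invoke the balanced decomposition property of $H$: since the two children of $w_i$ each contain at least $\lceil(|S_{w_i}|-1)/6\rceil$ sites, the child $w_{i-1}$ satisfies $|S_{w_{i-1}}| \leq (5|S_{w_i}|+1)/6$. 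Iterating, $|S_{w_{j-i}}| \leq (5/6)^i |S_u| + 1$, so a geometric-series estimate yields $\sum_{i=0}^{j} |S_{w_i}| \leq 6|S_u| + (j+1)$, and the telescoping $-(j+1)$ term cancels, giving
\begin{equation*}
\sum_{i=0}^{j} 2(|S_{w_i}|-1) \leq 12\,|S_u|.
\end{equation*}

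Finally I apply Lemma~\ref{lem:diameter} to $(s,t)$ with representing pair $(u,v)$: $c(|S_u|-1) \leq d(s,t)$, so $|S_u| \leq 1 + d(s,t)/c$. Combining with $d(s,t) \geq c$ (so $1 \leq d(s,t)/c$) yields a local routing cost of at most $24\,d(s,t)/c$, which is well within the claimed $48\,d(s,t)/c$.

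The only step that requires genuine care is the geometric summation: one must verify that the $+1$ slack in the recurrence $|S_{w_{i-1}}| \leq (5|S_{w_i}|+1)/6$ does not blow up the series, and that the Euler tours of the successive $T_{w_i}$'s cannot be charged more than once each (the algorithm only ``restarts'' after returning to $s$, so each restart genuinely corresponds to one fresh Euler tour of the enlarged tree). Both facts are straightforward but are the place where sloppiness would cost a factor in the constant.
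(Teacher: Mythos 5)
Your proof is correct and follows essentially the same approach as the paper: chain of nested subtrees from the leaf of $s$ up to $u$, bound each Euler tour by twice its tree size, sum geometrically using the balanced-decomposition ratio, and convert $|S_u|$ into $d(s,t)/c$ via Lemma~\ref{lem:diameter}. Your bookkeeping is actually a bit tighter than the paper's (you keep the $-2$ terms and use the $5/6$ contraction directly rather than relaxing to $11/12$, yielding $24\,d(s,t)/c$ instead of $48\,d(s,t)/c$), but the structure of the argument is identical.
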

\begin{proof}
Let $(u, v)$ be the WSPD-pair representing $(s, t)$, and
let $u_0, u_1, \dots, u_k = u$ be the path in $H$
from the leaf $u_0$ of $s$ to $u$.
Let $T_0, T_1, \dots, T_k$ and
$S_0, S_1, \dots, S_k$ be the corresponding
subtrees of $T$ and sites of $S$. The local routing algorithm iteratively
performs an Euler tour of $T_0, T_1, \dots, T_k$ (the tour of $T_k$ may stop
early). An Euler tour in
$T_i$ takes $2|S_i| - 2$ steps, and each edge has length at most $1$.
As described in Section~\ref{sec:wspd}, for $i = 0. \dots, k-1$, the WSPD ensures that
\[
|S_{i}| \leq |S_{i+1}| - \left\lceil \frac{|S_{i+1}| - 1}{6}\right\rceil
 \leq \frac{5}{6}\,|S_{i+1}| +  \frac{1}{6}
  \leq \frac{11}{12}\,|S_{i+1}|,
\]
since $|S_{i+1}| \geq 2$. It follows that the total distance for the local routing is
at most
\[
  \sum_{i=0}^{k} (2|S_i| - 2) \leq 2 |S_k| \sum_{i=0}^k \left(\frac{11}{12}\right)^i
\leq 24 |S_k|.
\]
By Lemma~\ref{lem:diameter},  we have $d(s,t) \geq c(|S_u| -1)$ and
since $S_k = S_u$ the total distance is bounded by
$24 |S_u| \leq 24 (d(s,t)/c + 1) \leq (48/c) d(s,t)$,
where the last inequality is true for $d(s,t) \geq c$.
\end{proof}

Finally, we can bound the stretch factor as follows.
\begin{lemma}
\label{lem:stretch}
For any two sites $s$ and $t$, we have
$d_\rho(s, t) \leq \bigl(1 + \eps) d(s, t)$.
\end{lemma}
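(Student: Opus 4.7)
The plan is induction on the depth of the global-routing recursion, with a depth-dependent stretch $\eta_L$, where $L$ denotes the number of nested recursive calls needed to complete the routing between $s$ and $t$. Lemma~\ref{lem:correctness}(iii) provides the base case $\eta_0 = 0$: whenever $d(s,t) < c$, the scheme produces a shortest path without recursing. Lemma~\ref{lem:middlesite}(ii) shows that each recursive call shrinks the current routing distance by a factor of at most $(5/8)(1+1/c) \leq 3/4$, so the recursion terminates after $L = O(\log(D/c)) = O(\log D)$ levels.

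For the inductive step I would fix a pair with $d(s,t) \geq c$ and let $(u,v) \in \Xi$ represent $(s,t)$, let $r \in S_u$ be the site whose global table stores $(u,v)$, and let $m$ be the middle site on a shortest path from $r$ to $\sigma(v)$. The routing path then splits as $s \to r$ (local Euler tour) followed by $r \to m$ and $m \to t$ (both recursive). Lemma~\ref{lem:localroutingdistance} bounds the first segment by $(48/c)\,d(s,t)$. To bound the two recursive segments I would exploit that $(u,v)$ also represents $(r,t)$, since $r \in S_u$; Lemma~\ref{lem:middlesite}(i) then gives $d(r,m)+d(m,t) \leq (1+2/c)\,d(r,t)$, and since $d(s,r) \leq \diam(S_u) \leq d(s,t)/c$ by Lemma~\ref{lem:diameter} we have $d(r,t) \leq (1+1/c)\,d(s,t)$, so $d(r,m)+d(m,t) \leq (1+4/c)\,d(s,t)$ for $c \geq 2$. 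Applying the inductive hypothesis $d_\rho(a,b) \leq (1+\eta_{L-1})d(a,b)$ to both recursive segments yields the one-step recurrence
\[
\eta_L \;\leq\; \frac{52}{c} + \Bigl(1+\frac{4}{c}\Bigr)\,\eta_{L-1}, \qquad \eta_0 = 0.
\]

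Unrolling this linear recurrence gives $\eta_L \leq 13\bigl((1+4/c)^L - 1\bigr) \leq 13\bigl(e^{4L/c}-1\bigr)$. Substituting $c = (\alpha/\eps)\log D$ and $L = O(\log D)$ yields $4L/c = O(\eps/\alpha)$ and hence $\eta_L = O(\eps/\alpha)$. Fixing the absolute constant $\alpha$ sufficiently large (independent of $\eps$ and $D$) then forces $\eta_L \leq \eps$, which is exactly the value of $\alpha$ promised in the preprocessing. The conclusion $d_\rho(s,t) \leq (1+\eps)d(s,t)$ follows.

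The main obstacle is to set up the induction correctly: a naive attempt with a fixed target stretch $1+\eps$ fails to close, because the middle-site detour and the $d(s,r)$-offset together already contribute a per-level multiplicative factor $(1+O(1/c)) > 1$, so the inequality $\eta \geq 52/c + (1+4/c)\eta$ has no bounded solution. The depth-indexed hypothesis is what saves the argument: it allows the stretch to grow gracefully with recursion depth, and the accumulated geometric series is tamed by coupling $L = O(\log D)$ with the deliberate scaling $c = \Theta((\log D)/\eps)$, so that $L/c = O(\eps/\alpha)$ and all overhead is absorbed into the $\eps$ budget.
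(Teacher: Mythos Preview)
Your proof is correct and rests on the same core insight as the paper's: each recursion level contributes a multiplicative overhead of $1+O(1/c)$, there are $O(\log D)$ levels, and the choice $c=\Theta((\log D)/\eps)$ absorbs the accumulated product into the $\eps$ budget. The difference is purely bookkeeping. The paper encodes the level implicitly by strengthening the induction hypothesis to
\[
d_\rho(s,t)\le\Bigl(1+\frac{\alpha}{c}\log d(s,t)\Bigr)d(s,t),
\]
and then uses Lemma~\ref{lem:middlesite}(ii) to gain a $-\alpha/(2c)$ in the log term at each step, which offsets the $(1+O(1/c))$ blow-up; all the algebra is carried out in a single pass. You instead make the level explicit, bound the depth $L=O(\log D)$ separately via the $(5/8)(1+1/c)\le 3/4$ contraction, and solve the linear recurrence $\eta_L\le 52/c+(1+4/c)\eta_{L-1}$ in closed form. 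Your version is more modular and arguably easier to verify; the paper's is more compact and avoids the auxiliary parameter $L$. Both use Lemmas~\ref{lem:middlesite} and~\ref{lem:localroutingdistance} in the same places, both apply Lemma~\ref{lem:middlesite} to the pair $(r,t)$ rather than $(s,t)$, and both correctly identify why a flat target stretch $1+\eps$ cannot close the induction on its own.
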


\begin{proof}
We show by induction on $d_\rho(s, t)$
that there is an $\alpha > 0$ with
\begin{equation*}
d_\rho(s, t) \leq \left(1 + \frac{\alpha}{c}\,\log d(s, t) \right) d(s, t).
\end{equation*}
Since $d(s,t) \leq \diam(S) = D$, the lemma then follows from our
choice of $c =
(\alpha/\eps) \log D$.

If $d(s, t) < c$, the claim follows by
Lemma~\ref{lem:correctness}(iii): the packet is
routed along a shortest path and incurs no detour.
If $d(s, t) \geq c$, Algorithm~\ref{alg:routing} performs
a local routing to find the site $r$ that has the WSPD-pair $(u,v)$
representing $(s,t)$ stored in $\rho_G(r)$. Then the packet is routed
recursively from $r$ to the middle site $m$ and from $m$ to $t$.
By Lemma~\ref{lem:localroutingdistance} the length of
the routing path is
$d_\rho(s, t)  \leq (48/c) d(s, t)
+ d_\rho(r, m) + d_\rho(m, t)$,
and by induction we get
\begin{align*}
 d_\rho(s, t) & \leq \frac{48}{c}\, d(s, t) +
  \left(1 + \frac{\alpha}{c}\, \log d(r, m) \right) d(r, m)
   + \left(1 + \frac{\alpha}{c} \,\log d(m, t)
\right) d(m,t).
\intertext{Since $m$ is a middle site on a
shortest $r$-$\sigma(v)$-path in $\UD(S)$, Lemma~\ref{lem:middlesite}(i),(ii)
and the fact that $\log(5/8) \leq - 1/2$ imply}
 d_\rho(s, t) & \leq \frac{48}{c} d(s, t) +
  \left(1 + \frac{\alpha}{c}\, \log(d(r, t))
    -\frac{\alpha}{2c}\right)\left(1+\frac{2}{c}\right)d(r, t).
\end{align*}
By the triangle inequality we have $d(r,t) \leq d(s,t) +
\diam(S_u)$, so Lemma~\ref{lem:diameter} gives
\begin{align*}
d_\rho(s, t)
& \leq \frac{48}{c}\, d(s, t)
  + \left(1 + \frac{\alpha}{c}\log\left(\left(1+\frac{1}{c}\right)d(s,t)\right)
    -\frac{\alpha}{2c}\right)\left(1+\frac{2}{c}\right)\left(1+\frac{1}{c}\right)d(s, t)  \\
& \leq \frac{48}{c}\, d(s, t)
  + \left(1 + \frac{\alpha}{c}\log\left(\left(1+\frac{1}{c}\right)d(s,t)\right)
    -\frac{\alpha}{2c}\right)\left(1+\frac{4}{c}\right)d(s, t),
\end{align*}
for $c$ large enough.
For $\alpha > 192$, we can eliminate the first term to get
\begin{align*}
 d_\rho(s, t) & \leq \left(1 + \frac{\alpha}{c} \log\left(\left(1+\frac{1}{c}\right)d(s,t)\right)
    - \frac{\alpha}{4c}\right)\left(1+\frac{4}{c}\right)d(s, t),
\intertext{and since now $c \geq 192$ and hence $\log (1+1/c) \leq 1/8$,
}
 d_\rho(s, t) &\leq \left(1 + \frac{\alpha}{c} \log(d(s,t))
    -\frac{\alpha}{8c}\right)\left(1+\frac{4}{c}\right)d(s, t)
  = \left(1 + \frac{\alpha}{c}\log d(s,t)\right) d(s, t) + \Delta,
\end{align*}
with
\[
\Delta = - \frac{\alpha}{8c}\,  \left(1+\frac{4}{c}\right)d(s,t)
+ \frac{4}{c}\,d(s,t)\left(1 + \frac{\alpha}{c}\, \log d(s,t)
    \right).
\]
It remains to show that $\Delta \leq  0$, i.e., that
\begin{align*}
    \frac{4}{c}d(s,t)\left(1 + \frac{\alpha}{c}\, \log d(s,t) \right)
\leq \frac{\alpha}{8c}\,\left(1+\frac{4}{c}\right)d(s,t).
\end{align*}
Now, since we picked $c = (\alpha/\eps)\log D$ and $\alpha \geq 192$,
we have
\[
1 + \frac{\alpha}{c}\, \log(d(s,t)) \leq 2
\leq
\frac{\alpha}{32} \left(1 + \frac{4}{c}\right),
\]
as desired.
This finishes the proof.
\end{proof}

Combining Lemma~\ref{lem:table-size}, Lemma~\ref{lem:preprocessingtime},
and Lemma~\ref{lem:stretch}
we obtain
the following theorem.
\begin{theorem}
Let $S$ be a set of $n$ sites in the plane with density $\vartheta$.
For any $\eps > 0$, we can preprocess $S$ into a routing scheme for
$\UD(S)$ with
labels of size $O(\log n)$ bits and routing
 tables of size $O(\vartheta\eps^{-2}\log^2 n \log^2 D)$ bits, where $D$ is the
diameter of $\UD(S)$. For any two sites $s$,$t$, the scheme produces a
routing path with $d_\rho(s,t) \leq (1+\eps) d(s,t)$ and during the routing
the maximum header size is $O(\log n \log D)$. The preprocessing time is
$O(n^2\log n + \vartheta n^2 +  \vartheta \eps^{-2}n\log n \log^2 D)$.
\end{theorem}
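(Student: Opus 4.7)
The plan is that this theorem is largely a bookkeeping step: the label, table, stretch, correctness, and preprocessing bounds follow immediately from the preceding lemmas, while only the header-size bound requires one short additional argument. First I would record that the postorder labeling on $H$ assigns each site an integer in $[1,n]$, so $L(n) = O(\log n)$; that Lemma~\ref{lem:table-size} gives the table-size bound $O(\vartheta\eps^{-2}\log^2 n\log^2 D)$ once we note that the local routing table $\rho_L(s)$ contributes only $O(\log n)$ bits (since $T$ has degree at most $6$ and each entry stores a level plus a neighbor's label), and is therefore absorbed by $\rho_G(s)$; and that the stretch, correctness, and preprocessing-time bounds are supplied by Lemmas~\ref{lem:stretch}, \ref{lem:correctness}, and~\ref{lem:preprocessingtime} respectively.

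The only bound not already explicit is the header size $H(n) = O(\log n\log D)$. At any moment during routing, the header consists of (a) a local-routing scratch (current exploration level, start edge, and previous vertex), totalling $O(\log n)$ bits, and (b) a stack of pending intermediate target labels, each of $O(\log n)$ bits. So bounding the header reduces to bounding the maximum stack depth across all source-target pairs.

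I would bound this depth by tracking nested invocations of the global routing phase. A new element is pushed onto the stack only when the routing function at a site $r$ replaces the current target $t$ by the middle site $m$ of a shortest path from $r$ to $\sigma(v)$, where $(u,v)\in\Xi$ represents $(r,t)$. This branch is entered only once the local search has succeeded, which by Lemma~\ref{lem:closepairs} forces $d(r,t)\geq c$; in particular the hypothesis $c\geq 13$ of Lemma~\ref{lem:middlesite} is met. Applying Lemma~\ref{lem:middlesite}(ii) to $r$ and $t$, the nested routing task has target distance $d(r,m)\leq (5/8)\,d(r,t)$, so each push shrinks the relevant routing distance by a fixed constant factor. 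Once this distance drops below $c$, Lemma~\ref{lem:correctness}(iii) takes over and no further push occurs. Starting from $d(s,t)\leq D$, the recursion depth — and hence the stack depth at any moment — is therefore $O(\log(D/c)) = O(\log D)$, giving the claimed $O(\log n\log D)$ bound on $H(n)$.

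The only conceptually delicate point is the uniform applicability of Lemma~\ref{lem:middlesite}(ii) at every depth of the recursion; once that is in place the rest is assembly. Combining the four lemma bounds with this header estimate completes the proof.
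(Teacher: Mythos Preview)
Your bookkeeping for the label, table, stretch, and preprocessing bounds matches the paper, which simply states the theorem after ``Combining Lemma~\ref{lem:table-size}, Lemma~\ref{lem:preprocessingtime}, and Lemma~\ref{lem:stretch}'' without further argument. You are also right that the header bound $H(n)=O(\log n\log D)$ is the only claim left implicit, and that controlling the stack depth via the distance shrinkage of Lemma~\ref{lem:middlesite}(ii) is the right mechanism. The gap is the assertion that ``once this distance drops below $c$, Lemma~\ref{lem:correctness}(iii) takes over and no further push occurs'', together with the claim that Lemma~\ref{lem:closepairs} ``forces $d(r,t)\geq c$'' whenever a push happens. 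Neither holds: Lemma~\ref{lem:correctness}(iii) only says $d_\rho(s,t)=d(s,t)$ for $d(s,t)<c$, and its own proof explicitly pushes $\ell(t)$ in the range $1<d(s,t)<c$. What Lemma~\ref{lem:closepairs} actually yields for $d(s,t)<c$ is that the representing pair consists of singletons, so $r=s$ and $\sigma(v)=t$; but since $s$ and $t$ need not be neighbors, Algorithm~\ref{alg:routing} still enters the middle-site branch and pushes.

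The repair is short. For $2\le d(s,t)<c$ the singleton case gives a middle site $m$ on a shortest $s$--$t$ path, so $\max\{d(s,m),d(m,t)\}\le d(s,t)/2+1/2\le\tfrac34\,d(s,t)$; combined with the shrinkage you already have for $d\ge c$, every push with current distance at least $2$ reduces that distance by a fixed constant factor, yielding $O(\log D)$ pushes before the distance drops below $2$. Once $1<d(s,t)<2$, any two vertices two hops apart on a shortest path must be non-neighbors (otherwise the direct edge would strictly shorten the path), so the path has fewer than $2d(s,t)+1<5$ edges and the remaining recursion depth is $O(1)$. This closes the argument and gives the claimed stack-depth bound.
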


\subsection{Extension to Arbitrary Density}
\label{sec:unboundeddensity}

Let $1+\eps$, $\eps > 0$, be the desired stretch factor.
To extend the routing scheme to point sets of
unbounded density, we follow a strategy similar
to Gao and Zhang~\cite[Section~4.2]{GaoZh05}:
we first pick an appropriate $\eps_1 > 0$, and we
compute an \emph{$\eps_1$-net} $R \subseteq S$,
i.e., a subset $R$ of sites such that each site in $S$ has
distance at most $\eps_1$ to the closest site in $R$ and
such that any two sites in $R$ are at distance at least
$\eps_1$ from each other, see Figure~\ref{fig:density}.
\begin{figure}[htb]
\centering
\includegraphics[scale=0.8]{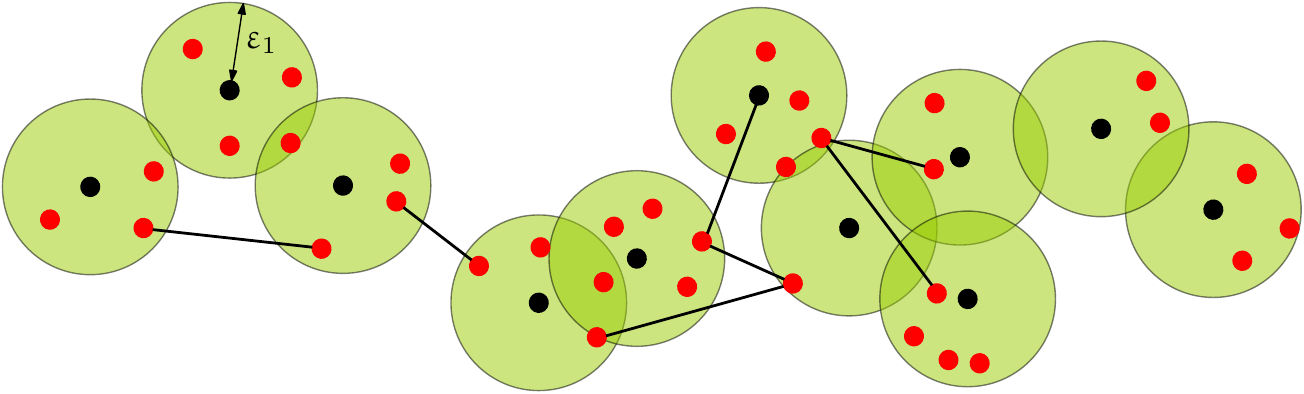}
\caption{The set $R$ (black) and the bridges (endpoints of black edges) form
the set $Z$.}
\label{fig:density}
\end{figure}

As we show below, it is easy to see that $R$ has density $O(\eps_1^{-2})$,
and we would like to represent each site in $S$ by the
closest site in $R$. However, the connected components of $\UD(R)$ might differ
from those of $\UD(S)$. To rectify this, we add additional sites
to $R$. This is done as follows:
two sites $s, t \in R$ are called \emph{neighbors} if
$|st| > 1$, but there are $p, q \in S$ such that
$s, p, q, t$ is a path in $\UD(S)$ and such that
$|sp|  \leq \eps_1$ and $|qt| \leq \eps_1$
(possibly, $s=p$ or $q = t$).
In this case, the pair of sites $p$ and $q$ is called a \emph{bridge}
for $s, t$. Let $R'$ be a point set that contains
an arbitrary bridge
for each pair of neighbors in $R$.
Set $Z = R \cup R'$. The following simple volume argument
shows that $Z$ has density $\vartheta = O(\eps_1^{-3})$.
\begin{lemma}
\label{lem:r-densitybound}
The set $R$ has density $O(\eps_1^{-2})$ and the set $Z$ has density
$O(\eps_1^{-3})$.
\end{lemma}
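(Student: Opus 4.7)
The plan is to prove each of the two density bounds by a standard volume/packing argument, being careful in the second bound to exploit that neighbor pairs lie in a thin annulus rather than in a full unit disk.

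For the bound on $R$, I would fix any unit disk $D$ centered at some point $o$, and use the fact that any two sites in $R$ are at distance at least $\eps_1$ from each other. Thus the open disks of radius $\eps_1/2$ around the points of $R\cap D$ are pairwise disjoint and all lie in the disk of radius $1+\eps_1/2$ around $o$. A direct area comparison gives $|R\cap D|\cdot \pi(\eps_1/2)^2 \leq \pi(1+\eps_1/2)^2$, which yields $|R\cap D| = O(\eps_1^{-2})$.

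For the bound on $Z$, it suffices to bound $|R'\cap D|$ and add the previous estimate. Every $p\in R'$ is, by construction, one endpoint of some bridge $(p,q)$ for a neighbor pair $(s,t)\in R\times R$, with $|sp|\leq \eps_1$ (after possibly swapping the roles of $p$ and $q$). If additionally $p\in D$, then $s$ lies in the disk $D'$ of radius $1+\eps_1$ around $o$, so by the same packing argument $|R\cap D'|=O(\eps_1^{-2})$. The key observation, which I expect to be the main source of the $\eps_1^{-3}$ (rather than $\eps_1^{-4}$) bound, is that the set of neighbors of a fixed $s\in R$ is small: from $|sp|\leq \eps_1$, $|pq|\leq 1$, and $|qt|\leq \eps_1$ we get $|st|\leq 1+2\eps_1$, and the definition requires $|st|>1$, so the neighbors of $s$ lie in the annulus $1<|\cdot -s|\leq 1+2\eps_1$. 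Inflating by $\eps_1/2$ for the packing, this annulus has area $O(\eps_1)$, so it contains $O(\eps_1)/O(\eps_1^{2}) = O(\eps_1^{-1})$ points of $R$.

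Putting the pieces together, each $s\in R\cap D'$ contributes at most $O(\eps_1^{-1})$ bridge points that are ``close to $s$'', so
\[
|R'\cap D|\;\leq\;|R\cap D'|\cdot O(\eps_1^{-1})\;=\;O(\eps_1^{-2})\cdot O(\eps_1^{-1})\;=\;O(\eps_1^{-3}).
\]
Adding $|R\cap D|=O(\eps_1^{-2})$ gives $|Z\cap D|=O(\eps_1^{-3})$, as required. The main obstacle, as indicated, is recognizing that the correct packing estimate must be carried out inside a thin annulus of radial width $O(\eps_1)$; a careless application over the full unit disk would only give $O(\eps_1^{-2})$ neighbors per site and the weaker bound $O(\eps_1^{-4})$.
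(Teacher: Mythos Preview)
Your proposal is correct and follows essentially the same approach as the paper: a packing argument for $R$, then charging each bridge point to its nearby net site and bounding the number of neighbors of a fixed $s\in R$ via the thin annulus $1<|st|\le 1+2\eps_1$ to get $O(\eps_1^{-1})$ neighbors per site. The paper carries out the same steps with the same constants (radii $1+\eps_1/2$, $1+\eps_1$, and annulus inflated to inner radius $1-\eps_1/2$ and outer radius $1+5\eps_1/2$), so there is no substantive difference.
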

\begin{proof}
Let $D$ be a unit disk and let $D'$ be the disk with radius $1+\eps_1/2$
concentric
to $D$. For each $s \in R \cap D$, the disk $D(s,\eps_1/2)$
with center $s$ and radius $\eps_1/2$ is contained in $D'$.
Let $s,t \in R$ be two sites.
By construction $|st|\geq \eps_1$ and thus the disks
$D(s, \eps_1 / 2)$ and $D(t, \eps_1 / 2)$ are disjoint.
A disk of radius $\eps_1 / 2$ has area $\pi \eps_1^2/4$ and the area of
$D'$
is $\pi(1+\eps_1/2)^2$. Thus we can place $O(\eps_1^{-2})$ such disks into
$D'$ disjointly. Hence, the density of $R$ is
$O(\eps_1^{-2})$.

Now, let $D''$ be the disk
with radius $1 + \eps_1$
concentric to $D$.
To bound the density of $Z$,
let $p \in (Z \setminus R) \cap D$ be a site that belongs to a bridge.
Suppose that $s \in R$ is responsible for $p$ being a bridge site.
Then we have $|sp| \leq \eps_1$ and by
construction $s \in D''$. We charge $p$ to $s$.
The same volume argument as above shows that
$|R \cap D''| = O(1/\eps_1^{-2})$.
Below we show that $s$ has $O(\eps_1^{-1})$ neighbors and thus
can get $O(\eps_1^{-1})$ charges from bridge sites. Hence,
the number of bridge sites in $Z \cap D$, and also
the density of $Z$, is $O(\eps_1^{-3})$.

Consider the annulus $A$
around $s$ with inner radius
$1$ and outer radius $1+2\eps_1$.
All neighbors of $s$ must lie in $A$.
Let $A'$ the annulus concentric to $A$ with inner radius
$1 - \eps_1/2$ and outer radius $1+(5/2)\eps_1$.
The area of $A'$ is
\begin{align*}
 \pi\left(1+\frac{5}{2}\eps_1\right)^2 -  \pi\left(1 - \frac{\eps_1}{2}\right)^2 = 6\eps_1 + 6\eps_1^2,
\end{align*}
and thus, since $R$ is an $\eps_1$-net, we
can place $O(\eps_1^{-1})$ sites of $R$ in $A$.
Hence, $s$ has $O(\eps_1^{-1})$ neighbors, as claimed.
\end{proof}

Furthermore, Gao and Zhang show the following:
\begin{lemma}[Lemma~4.8 and Lemma~4.9 in \cite{GaoZh05}]
\label{lem:deltacover}
We can compute $Z$ in $O((n/\eps_1^2) \log n)$ time,
and if $d^Z(\cdot, \cdot)$ denotes the shortest
path distance in $\UD(Z)$, then, for any $s, t \in R$,
we have
$
  d^Z(s, t) \leq (1+ 12\eps_1) d(s, t) +  12\eps_1$.
\end{lemma}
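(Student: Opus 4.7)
The plan is to handle the two claims---construction of $Z$ and the distance bound---separately.

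For the construction of $Z$: overlay a grid on $\R^2$ of side length $\eps_1/\sqrt{2}$, so that each cell contains at most one site of the $\eps_1$-net $R$. Build $R$ greedily: scan the input sites, inserting the next site only if it lies at distance $\geq \eps_1$ from all previously selected sites; each test inspects a constant number of neighboring grid cells, yielding $O(n\log n)$ time. To assemble the bridges, iterate over each $r\in R$ and enumerate its candidate neighbors---sites $r'\in R$ with $|rr'|\in(1,1+2\eps_1]$. An annulus-packing argument analogous to Lemma~\ref{lem:r-densitybound} bounds this at $O(1/\eps_1)$ per site. For each candidate pair, inspect the $O(1/\eps_1^2)$ sites of $S$ lying within $\eps_1$ of $r$ and test whether any of them has a graph neighbor within $\eps_1$ of $r'$; this can be done in constant time per test using the grid. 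Summing yields a total running time of $O((n/\eps_1^2)\log n)$.

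For the distance bound, fix $s, t\in R$ and let $P = s = p_0, p_1, \dots, p_k = t$ be a shortest path in $\UD(S)$ of Euclidean length $d(s,t)$. The plan is to mimic $P$ inside $\UD(Z)$ by replacing each $p_i$ with its nearest site $r_i := r(p_i) \in R$ (where $r_0 = s$ and $r_k = t$) and stitching the $r_i$'s together via direct edges or bridges. Since $|r_ip_i|\le \eps_1$, the triangle inequality yields $|r_ir_{i+1}|\le |p_ip_{i+1}|+2\eps_1\le 1+2\eps_1$, so consecutive $r_i$'s are either joined by a direct edge of $\UD(Z)$ (when $|r_ir_{i+1}|\le 1$) or, when $|r_ir_{i+1}|>1$, form a pair of neighbors witnessed by $(p_i,p_{i+1})$ itself; in the latter case $Z$ contains a bridge for $(r_i,r_{i+1})$, providing a three-edge sub-path of length at most $1+2\eps_1$.

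To bound the total length, a crucial preprocessing step is to apply shortcutting to $P$: if $|p_{i-1}p_{i+1}|\le 1$, then $p_{i-1}p_{i+1}$ is an edge of $\UD(S)$ and $P$ could be shortened, contradicting minimality. Hence $|p_{i-1}p_i|+|p_ip_{i+1}|\geq |p_{i-1}p_{i+1}|>1$ for every interior $i$; summing these inequalities and observing that each edge is counted in at most two consecutive pairs yields $k\le 2d(s,t)+1$. Direct-edge contributions then sum to at most $\sum_i(|p_ip_{i+1}|+2\eps_1)\le d(s,t)+2\eps_1 k$, while each bridge contribution $1+2\eps_1$ is bounded by $\frac{1+2\eps_1}{1-2\eps_1}|p_ip_{i+1}|$ using $|p_ip_{i+1}|\geq |r_ir_{i+1}|-2\eps_1>1-2\eps_1$. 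Combining with the bound on $k$ gives $d^Z(s,t)\le(1+12\eps_1)d(s,t)+12\eps_1$ for sufficiently small $\eps_1$. The main obstacle is controlling the accumulated $O(\eps_1)$ detour incurred at each $p_i$: without the shortcutting step, $k$ could be $\Theta(n)$ and the additive slack $2\eps_1 k$ would dominate, so bounding $k$ in terms of $d(s,t)$ is the essential ingredient.
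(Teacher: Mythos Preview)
The paper does not give its own proof of this lemma; it is imported wholesale from Gao and Zhang and invoked as a black box, so there is no in-paper argument to compare against. Your distance-bound argument is the standard one and is essentially correct: map a shortcut-minimal shortest $s$--$t$ path in $\UD(S)$ to representatives in $R$, note that consecutive representatives are either adjacent in $\UD(Z)$ or form a neighbor pair (witnessed by the original edge, hence joined by some bridge in $Z$), and use $k\le 2d(s,t)+1$ to absorb the per-step $O(\eps_1)$ slack.

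Your running-time argument, however, has a real gap. You assert that for each candidate pair $(r,r')$ you ``inspect the $O(1/\eps_1^2)$ sites of $S$ lying within $\eps_1$ of $r$''. That bound is false: the entire purpose of Section~\ref{sec:unboundeddensity} is to handle point sets $S$ of \emph{unbounded} density, so an $\eps_1$-ball around $r$ may contain $\Theta(n)$ sites of $S$. (The $O(1/\eps_1^2)$ bound applies to $R$, not $S$.) For the same reason, the claim that one can test ``in constant time \dots\ using the grid'' whether a given $p$ has a $\UD(S)$-neighbor within $\eps_1$ of $r'$ does not follow, since a grid cell of side $\eps_1/\sqrt 2$ can also contain arbitrarily many sites of $S$. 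The fix is an amortized accounting: assign each $p\in S$ to the $O(1)$ sites of $R$ within $\eps_1$ of it (so the total number of $(r,p)$ incidences is $O(n)$), and detect bridges via nearest-neighbor or closest-pair queries between the buckets rather than by naive enumeration; only then does the work come out to $O((n/\eps_1^2)\log n)$.
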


Now, our extended routing scheme proceeds as follows:
first, we compute $R$ and $Z$ as described above, and
we perform the preprocessing algorithm for $Z$ with $\eps_1$
as the stretch parameter.
We assign arbitrary new labels to the sites in $S \setminus Z$.
Then, we extend the label $\ell(s)$ of each site $s \in S$,
such that it also contains the label of a site
in $R$ closest to $s$. The label size remains $O(\log n)$.

To route between two sites $s, t \in S$, we first check whether
we can go from $s$ to $t$ in one step (we assume that this can
be checked locally by the routing function). If so, we route
the packet directly. Otherwise, we have $d(s,t) > 1$.
Let $s', t' \in R$ be the closest sites in $R$ to $s$ and to $t$.
By construction,  we can obtain $s'$ and $t'$ from $\ell(s)$ and $\ell(t)$.
Now, we first go from $s$ to $s'$. Then,
we use the low-density algorithm to route from $s'$ to
$t'$ in $\UD(Z)$, and finally we go from $t'$ to $t$
in one step.
Using the discussion above, the total routing distance is bounded by
\begin{align*}
d_\rho(s,t) & \leq  |ss'| + d^Z_\rho(s', t') + |t't|,\\
\intertext{where $d^Z_\rho(\cdot,\cdot)$ is the routing distance in
$\UD(Z)$. By Lemma~\ref{lem:stretch} and \ref{lem:deltacover}, this is}
& \leq
\eps_1 + (1 + \eps_1)d^Z(s', t') + \eps_1\\
& \leq
2\eps_1 + (1 + \eps_1)\bigl((1+12\eps_1)d(s', t') + 12\eps_1\bigr), \\
\intertext{and by using the triangle inequality twice this is}
& \leq  2\eps_1 + (1 + \eps_1)\bigl((1+12\eps_1)(d(s, t) + 2\eps_1) +
12\eps_1\bigr).\\
\intertext{Rearranging and using $d(s,t) > 1$ yields }
& \leq  (1+29\eps_1 + 50\eps_1^2 + 24\eps_1^3) d(s, t) \leq
(1+\eps) d(s, t),
\end{align*}
where the last inequality holds for $\eps_1 \leq \eps/103$. This establishes
our main theorem:
\begin{theorem}
Let $S$ be a set of $n$ sites in the plane. For any $\eps
> 0$, we can preprocess $S$ into a routing scheme for $\UD(S)$ with labels
of $O(\log n)$ bits and routing
 tables of size $O(\eps^{-5}\log^2 n \log^2 D)$, where $D$ is the
diameter of $\UD(S)$. For any two sites $s$,$t$, the scheme produces a
routing path
with $d_\rho(s,t) \leq (1+\eps) d(s,t)$ and during the routing
the maximum header size is $O(\log n \log D)$. The preprocessing time is
$O(n^2\log n + \eps^{-3} n^2 +  \eps^{-5}n\log n \log^2 D)$.
\end{theorem}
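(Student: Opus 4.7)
The plan is to reduce the general (unbounded density) case to the bounded-density theorem proved immediately above via the $\eps_1$-net construction $R \subseteq Z \subseteq S$ described just before the statement. First I would fix a parameter $\eps_1 = \Theta(\eps)$ (concretely $\eps_1 = \eps/103$, as the stretch calculation will force), then build $R$ and the bridge-augmented set $Z$ using the $O((n/\eps_1^2)\log n)$-time algorithm of Lemma~\ref{lem:deltacover}. By Lemma~\ref{lem:r-densitybound}, $Z$ has density $\vartheta = O(\eps_1^{-3})$, so the bounded-density theorem applies to $\UD(Z)$ with stretch parameter $\eps_1$, producing labels $\ell_Z(\cdot)$, tables $\rho_Z(\cdot)$, and a routing function for $\UD(Z)$.

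Next I would describe the labels and the routing function for $\UD(S)$. For each $s \in S$ the label $\ell(s)$ is obtained by concatenating its own identifier with the label $\ell_Z(s')$ of a nearest net-point $s' \in R$; since $|R| \leq |S|$ this stays within $O(\log n)$ bits. To route from $s$ to $t$, the function first tests whether $|st| \leq 1$ (which can be decided locally given the coordinates stored at $s$) and, if so, uses a direct edge. Otherwise it extracts $s'$ and $t'$ from $\ell(s)$ and $\ell(t)$, takes one step from $s$ to $s'$, invokes the bounded-density scheme on $\UD(Z)$ to traverse from $s'$ to $t'$ (storing the original target on the header stack exactly as in Algorithm~\ref{alg:routing}), and finally takes one step from $t'$ to $t$.

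Plugging $\vartheta = O(\eps_1^{-3})$ into the bounded-density theorem yields table size $O(\vartheta \eps_1^{-2} \log^2 n \log^2 D) = O(\eps_1^{-5} \log^2 n \log^2 D) = O(\eps^{-5} \log^2 n \log^2 D)$, header size $O(\log n \log D)$, and preprocessing time $O(n^2 \log n + \vartheta n^2 + \vartheta \eps_1^{-2} n \log n \log^2 D) = O(n^2 \log n + \eps^{-3} n^2 + \eps^{-5} n \log n \log^2 D)$, absorbing the $O((n/\eps_1^2)\log n)$ cost of building $Z$. Thus all parameters except the stretch are immediate.

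The only nonroutine step, and the main obstacle, is controlling the stretch, since the routing path in $\UD(Z)$ can be strictly longer than the shortest path in $\UD(S)$. Here I would use the triangle-inequality bound $d_\rho(s,t) \leq |ss'| + d^Z_\rho(s',t') + |t't|$, the stretch guarantee $d^Z_\rho(s',t') \leq (1+\eps_1) d^Z(s',t')$ from the bounded-density theorem applied to $Z$, and Lemma~\ref{lem:deltacover} to convert $d^Z(s',t')$ back to $d(s',t') \leq d(s,t) + 2\eps_1$ with multiplicative factor $(1+12\eps_1)$ and an additive $12\eps_1$ term. Using $|ss'|,|t't|\leq \eps_1$ and the assumption $d(s,t) > 1$ to absorb the additive $O(\eps_1)$ terms into a multiplicative overhead, a routine expansion gives $d_\rho(s,t) \leq (1 + 29\eps_1 + 50\eps_1^2 + 24\eps_1^3)\, d(s,t)$, and the choice $\eps_1 = \eps/103$ forces this below $(1+\eps)d(s,t)$, completing the proof.
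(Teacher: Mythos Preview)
Your proposal is correct and follows the paper's approach essentially verbatim: reduce to the bounded-density theorem by building the $\eps_1$-net $R$ and the bridge-augmented set $Z$ of density $O(\eps_1^{-3})$, route $s\to s'\to t'\to t$ via the scheme on $\UD(Z)$, and control the stretch with Lemma~\ref{lem:deltacover} and the same expansion yielding $(1+29\eps_1+50\eps_1^2+24\eps_1^3)d(s,t)$ for $\eps_1=\eps/103$. The parameter calculations for table size, header size, and preprocessing time match the paper's as well.
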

\begin{proof}
The theorem follows from the above discussion and from the fact
that the set $Z$ has density $O(\eps^{-3})$, by our choice of
$\eps_1$.
\end{proof}

\section{Conclusion}
We have presented an efficient routing scheme for unit disk
graphs that produces a routing path whose length can be made
arbitrarily close to optimal. For this, we used the fact that
the unit disk graph metric admits a small WSPD. Our techniques
almost solely rely on properties of well-separated pairs and
thus we expect our approach to generalize to other graph metrics
for which WSPDs can be found. One such example is the
\emph{hop-distance} $d_h(\cdot,\cdot)$ in unit disk graphs,
in which all edges have length $1$. Let $S$ be a set of sites
and let $\diam_h(S)$ denote the diameter of $S$ in terms of
$d_h(\cdot,\cdot)$. Since $\diam_h(S) \leq |S|-1$ and
$|st| \leq d_h(s,t)$ for every two sites $s,t \in S$,
the well-separation condition (\ref{equ:well-separated})
implies also separation with respect to the
hop-distance.
Thus, we can also find a routing scheme that approximates the
number of hops used in the routing path instead of its Euclidean length.

Various open questions remain. First of all, it would be interesting
to improve the size of the routing tables. One way to
achieve this might be to decrease the dependency on $\eps$.
The $\eps^{-5}$-factor seems to be rather high. It is
mostly due to the $\eps^{-3}$-factor that we introduced in
Section~\ref{sec:unboundeddensity} when
extending the routing scheme to a set of sites of
 unbounded density.  Further improvements might be on the side
of the WSPD: traditional WSPDs have only $O(c^2 n)$ pairs, while the WSPD of
Gao and Zhang has an additional logarithmic factor. Whether
this factor can be avoided is still an open question and any
improvement in the number of pairs would immediately decrease
the size of our routing tables by the same
amount.

Furthermore, our routing scheme
makes extensive use of a modifiable header. While this is coherent
with the usual model for routing schemes,  the scheme of
Yan~\etal~does not need such a header.
In order to be completely comparable to their result, we would need to
have a routing scheme that only requires a small routing table
to produce a routing path with stretch $1+\eps$.

\bibliographystyle{abbrv}
\bibliography{literature}

\newcommand{\SortNoop}[1]{}
\begin{thebibliography}{10}

\bibitem{BoseMoStUr01}
P.~Bose, P.~Morin, I.~Stojmenovic, and J.~Urrutia.
\newblock Routing with guaranteed delivery in ad hoc wireless networks.
\newblock {\em Wireless Networks}, 7(6):609--616, 2001.

\bibitem{CallahanKo95}
P.~Callahan and S.~Kosaraju.
\newblock {A decomposition of multidimensional point sets with applications to
  {$k$}-nearest-neighbors and {$n$}-body potential fields}.
\newblock {\em J. ACM}, 42(1):67--90, 1995.

\bibitem{Chechik13}
S.~Chechik.
\newblock Compact routing schemes with improved stretch.
\newblock In {\em {Proc.} 32nd {ACM Symp. on Principles of Distributed
  Computing (PODC)}}, pages 33--41, 2013.

\bibitem{ClarkCo90}
B.~N. Clark, C.~J. Colbourn, and D.~S. Johnson.
\newblock Unit disk graphs.
\newblock {\em Discrete Math.}, 86(1--3):165--177, 1990.

\bibitem{Cormen09}
T.~H. Cormen, C.~E. Leiserson, R.~L. Rivest, and C.~Stein.
\newblock {\em Introduction to Algorithms}.
\newblock MIT Press, third edition, 2009.

\bibitem{4M}
M.~de~Berg, O.~Cheong, M.~van Kreveld, and M.~Overmars.
\newblock {\em {Computational Geometry: Algorithms and Applications}}.
\newblock {Springer}, 3rd edition, 2008.

\bibitem{FraigniaudGa01}
P.~Fraigniaud and C.~Gavoille.
\newblock Routing in trees.
\newblock In {\em {Proc.} 28th {Internat. Colloq. Automata Lang. Program.
  (ICALP)}}, pages 757--772, 2001.

\bibitem{GaoZh05}
J.~Gao and L.~Zhang.
\newblock Well-separated pair decomposition for the unit-disk graph metric and
  its applications.
\newblock {\em SIAM J. Comput.}, 35(1):151--169, 2005.

\bibitem{GiordanoSt04}
S.~Giordano and I.~Stojmenovic.
\newblock Position based routing algorithms for ad hoc networks: A taxonomy.
\newblock In {\em Ad Hoc Wireless Networking}, pages 103--136. {Springer},
  2004.

\bibitem{GuptaKuRa04}
A.~Gupta, A.~Kumar, and R.~Rastogi.
\newblock Traveling with a {P}ez dispenser (or, routing issues in {MPLS}).
\newblock {\em SIAM J. Comput.}, 34(2):453--474, 2004.

\bibitem{KarpKu00}
B.~Karp and H.~T. Kung.
\newblock {GPSR:} greedy perimeter stateless routing for wireless networks.
\newblock In {\em {Proc.}6th Annual Int. Conf. Mobile Computing and Networking
  (MOBICOM)}, pages 243--254, 2000.

\bibitem{KuhnWaZhZo03}
F.~Kuhn, R.~Wattenhofer, Y.~Zhang, and A.~Zollinger.
\newblock Geometric ad-hoc routing: of theory and practice.
\newblock In {\em {Proc.}22nd ACM Symp. Principles Dist. Comp. (PODC)}, pages
  63--72, 2003.

\bibitem{MitchellMu17}
J.~S.~B. Mitchell and W.~Mulzer.
\newblock Proximity algorithms.
\newblock In C.~Toth, J.~O'Rourke, and J.~Goodman, editors, {\em Handbook of
  Discrete and Computational Geometry}. {CRC Press}, third edition, 2017.
\newblock to appear.

\bibitem{NarasimhanSmid07}
G.~Narasimhan and M.~H.~M. Smid.
\newblock {\em {Geometric spanner networks}}.
\newblock {Cambridge University Press}, 2007.

\bibitem{PelegUp89}
D.~Peleg and E.~Upfal.
\newblock A trade-off between space and efficiency for routing tables.
\newblock {\em {J. ACM}}, 36(3):510--530, 1989.

\bibitem{RodittyTo15}
L.~Roditty and R.~Tov.
\newblock New routing techniques and their applications.
\newblock In {\em {Proc.} 34th {ACM Symp. on Principles of Distributed
  Computing (PODC)}}, pages 23--32, 2015.

\bibitem{SantoroKhatibK85}
N.~Santoro and R.~Khatib.
\newblock Labelling and implicit routing in networks.
\newblock {\em {Comput. J.}}, 28(1):5--8, 1985.

\bibitem{Thorup04}
M.~Thorup.
\newblock Compact oracles for reachability and approximate distances in planar
  digraphs.
\newblock {\em {J. ACM}}, 51(6):993--1024, 2004.

\bibitem{ThorupZw01}
M.~Thorup and U.~Zwick.
\newblock Compact routing schemes.
\newblock In {\em {Proc.} 13th {ACM} Symposium on Parallel Algorithms and
  Architectures {(SPAA)}}, pages 1--10, 2001.

\bibitem{YanXiDr12}
C.~Yan, Y.~Xiang, and F.~F. Dragan.
\newblock Compact and low delay routing labeling scheme for unit disk graphs.
\newblock {\em {Comput. Geom.}}, 45(7):305--325, 2012.

\end{thebibliography}
\end{document}